\theoremstyle{plain}
\newtheorem*{thm*}{Theorem~\ref{t:main}}
\newtheorem{lem}{Lemma}
\newtheorem*{lem*}{Lemma}
\newtheorem{theorem}[lem]{Theorem}
\theoremstyle{definition}
\newtheorem{dfn}{Definition}
\newtheorem*{dfn*}{Definition}
\newcommand{\bmat}[1]{\begin{matrix} #1 \end{matrix}}
\newcommand{\loq}[1]{\begin{array}{l} (#1) \end{array}}
\newcommand{\eq}[1]{\begin{eqnarray} #1 \end{eqnarray}}
\newcommand{\nx}{ \nonumber \\}
\def \({\left(}
\def \){\right)}
\def \[{\left[}
\def \]{\right]}
\newcommand{\sumi}{\sum_{i=1}^N}
\newcommand{\jx}{J_X X_i X_{i+1}}
\newcommand{\jy}{J_Y Y_i Y_{i+1}}
\newcommand{\jz}{J_Z Z_i Z_{i+1}}
\newcommand{\hx}{h_X X_i}
\newcommand{\hy}{h_Y Y_i}
\newcommand{\hz}{h_Z Z_i}
\newcommand{\stard}{* \cdots *}
\newcommand{\stara}{& * & \cdots & * &}
\newcommand{\starn}{&   &        &   &}
\newcommand{\boldA}{\boldsymbol{A}}
\newcommand{\boldB}{\boldsymbol{B}}
\newcommand{\ak}{{\boldA}^k}
\newcommand{\aki}{\ak_i}
\newcommand{\qa}{q_{\aki}}
\newcommand{\Bk}{{\boldB}^{k+1}}
\newcommand{\Bki}{\Bk_i}
\newcommand{\rB}{r_{\Bki}}
\newcommand{\bk}{{\boldB}^k}
\newcommand{\bki}{\bk_i}
\newcommand{\rb}{r_{\bki}}
\begin{document}

\preprint{APS/123-QED}

\title{Proof of the absence of local conserved quantities\\ in general spin-1/2 chains with symmetric nearest-neighbor interaction}
\author{Mizuki Yamaguchi}
\email{yamaguchi-q@g.ecc.u-tokyo.ac.jp}
\affiliation{
Graduate School of Arts and Sciences, The University of Tokyo,
3-8-1 Komaba, Meguro, Tokyo 153-8902
, Japan}
\author{Yuuya Chiba}
\email{yuya.chiba@riken.jp}
\affiliation{Nonequilibrium Quantum Statistical Mechanics RIKEN Hakubi Research Team,
RIKEN Cluster for Pioneering Research (CPR), 2-1 Hirosawa, Wako, Saitama 351-0198, Japan}
\author{Naoto Shiraishi}
\email{shiraishi@phys.c.u-tokyo.ac.jp}
\affiliation{
Graduate School of Arts and Sciences, The University of Tokyo,
3-8-1 Komaba, Meguro, Tokyo 153-8902
, Japan}

\begin{abstract}

We provide a rigorous proof of the absence of nontrivial local conserved quantities in all spin-1/2 chains with symmetric nearest-neighbor interaction, except for known integrable systems.
This result shows that there are no further integrable system that awaits to be discovered.
Our finding also implies that there is no intermediate systems with a finite number of nontrivial local conserved quantities.
In addition, we clarify all short-support conserved quantities in non-integrable systems, which we need to take into account in analyses of thermalization and level statistics.

\end{abstract}
\maketitle


\section{Introduction}\label{Introduction}
Quantum integrability is an important subject in the field of mathematical physics \cite{baxter2016exactly, jimbo1994algebraic, takahashi1999thermodynamics, fagotti2013reduced}.
Various elaborated tools for solving integrable systems have been developed \cite{sklyanin1992quantum, faddeev1996algebraic, tetel1982lorentz, thacker1986corner}, and many models are proven to be integrable \cite{1931.Bethe.ZP.71, lieb1961two, yang1966one, lieb1968absence, baxter1971one, maassarani1998xxc, kitaev2001unpaired, yanagihara2020exact, sutherland1975model, Takhtajan:1982jeo, babujian1982exact, barber1989spectrum}.
One prominent useful tool is the quantum inverse scattering method \cite{sklyanin1992quantum}, which is extremely close to the algebraic Bethe ansatz \cite{faddeev1996algebraic}.
The quantum inverse scattering method solves a wide class of integrable systems by constructing a transfer matrix $T(\lambda)$, from which a sequence of local conserved quantities $\{Q_k\}$ are generated as the high-order derivatives of $\log T(\lambda)$. 
At the same time, the same transfer matrix yields Bethe ansatz eigenstates of integrable systems, which is an outline of the algebraic Bethe ansatz.
As seen from the above construction, an infinite sequence of local conserved quantities is the central ingredient of solvability.

Unlike the deep understanding of integrable systems, very few things have been rigorously established on non-integrable systems.
Here, we identify non-integrable systems as systems with no nontrivial local conserved quantity \cite{mori2018thermalization, shiraishi2019proof}.
Of course, there are many models which cannot be solved by any known tools, and therefore these models are strongly expected to be non-integrable.
This expectation is supported by energy level spacing statistics, where the distribution tends to be the Poisson distribution in integrable systems and the Wigner-Dyson distribution in non-integrable systems \cite{dyson1962statistical, wigner1993characteristic, rigol2009breakdown, santos2010onset, atas2013distribution}.
However, this expectation has been stuck at a mere expectation without mathematical justifications for a long time.
One crucial reason is that present tools for integrable systems presume specific forms of local conserved quantities, and there has been no good idea to generally exclude all possible forms of local conserved quantities.
From this background, some mathematical physicists even felt that non-integrability is out of the scope of mathematical physics, which can be just accepted as a plausible description.

Contrary to such pessimistic views, recent analytical studies have revealed that non-integrability is indeed in the scope of mathematical physics.
The XYZ model with a magnetic field \cite{shiraishi2019proof}, the transverse-field Ising model \cite{chiba2024proof}, the PXP model \cite{park2024proof}, and the Heisenberg model with next-nearest-neighbor interaction \cite{shiraishi2024absence} were rigorously proven to have no nontivial local conserved quantities. 
These studies pinpoint models expected to be non-integrable and overthrow one by one.
Thus, despite notable achievements of these studies, whether overlooked integrable systems still exist in a class of some simple systems (e.g., systems with nearest-neighbor interaction) is left as an open problem.

This problem, which is theoretically important by itself, is also related to our observation of generic macroscopic systems in our world.
We empirically know that generic macroscopic systems follow thermalization to the equilibrium state \cite{neumann1929beweis, goldstein2010long}, the Kubo formula in linear response theory \cite{kubo1957statistical}, and the Fourier law in heat conduction \cite{fourier1822}. 
However, integrable systems with many local conserved quantities are known not to follow these empirical laws \cite{mazur1969non, suzuki1971ergodicity, zotos1997transport, saito2003strong, rigol2008thermalization}.
Thus, integrable systems are considered to be exceptional, and most of systems should be non-integrable.
This view is highly plausible, though a mathematical foundation again lacks.

In this study, we conduct a comprehensive analysis of local conserved quantities in general spin-1/2 chains with shift-invariant and parity-symmetric nearest-neighbor interactions. We rigorously prove that all systems in this class, aside from known integrable systems, do not possess nontrivial local conserved quantities. This class includes integrable systems of two different origins: models solved by the Bethe ansatz (the XXX model \cite{1931.Bethe.ZP.71}, the XXZ model \cite{yang1966one} and the XYZ model \cite{baxter1971one}) and models mapped to free fermion systems (the XY model \cite{lieb1961two} and the transverse-field Ising model \cite{lieb1961two}). 
Our result establishes that there is no overlooked integrable systems in this class, and known integrable systems serve as the complete list of all integrable systems (Table~\ref{table:integrable}). 
This fact suggests the expected ubiquitousness of non-integrability.
Furthermore, as an important implication of our result, any model in this class is shown to have either (i) infinitely many local conserved quantities or (ii) no nontrivial local conserved quantity, and no system has a finite number of nontrivial local conserved quantities.

The structure of this paper is as follows. In Sec.~\ref{sec:Setup}, we introduce the setup and main claim, with describing the class of systems studied in this paper. We also define local conserved quantities to clarify the statement of the main theorem. In Sec.~\ref{sec:Integrable}, we list known integrable systems. 
In Sec.~\ref{sec:Validity}, we review previous discussions on the definitions of integrability and non-integrability and declare our tentative characterization.
Sec.~\ref{sec:Proof} is the main part of this paper, where we present the proof of the main theorem. We divide cases by the rank of interaction coefficients and treat these cases separately.
In Sec.~\ref{sec:Trivial} we explore ``trivial'' local conserved quantities in non-integrable systems. Finally, in Sec.~\ref{sec:Conclusion}, we summarize our research findings and discuss their implications.

\section{Setup and main result}\label{sec:Setup}

We consider a general spin-1/2 chain on $N$ sites, with nearest-neighbour interactions and magnetic fields, which is shift invariant and parity symmetric.
The general form of the Hamiltonian is expressed as
\eq{
H =  && \sumi \begin{pmatrix} X_i & Y_i & Z_i \end{pmatrix}
\begin{pmatrix}
J_{XX} & J_{XY} & J_{XZ} \\
J_{YX} & J_{YY} & J_{YZ} \\
J_{ZX} & J_{ZY} & J_{ZZ} \\ \end{pmatrix}
\begin{pmatrix} X_{i+1} \\ Y_{i+1} \\ Z_{i+1} \end{pmatrix} \nx
&+& \sumi \begin{pmatrix} h_X & h_Y & h_Z \end{pmatrix} \begin{pmatrix} X_i \\ Y_i \\ Z_i \end{pmatrix}
\label{general} ~,
}
where $X_i,Y_i,Z_i$ represent the Pauli matrices $\sigma^x_i, \sigma^y_i, \sigma^z_i$, respectively,
and $J_{\alpha \beta}$ and $ h_\alpha$ ($\alpha, \beta \in \{X,Y,Z\}$) are arbitrary symmetric real numbers, i.e., $J_{\alpha\beta}=J_{\beta\alpha}$ for any $\alpha, \beta \in \{X,Y,Z\}$.
We identify site $N+1$ to site $1$, meaning the periodic boundary condition.

In order to state our claim in a rigorous manner, we first define some terms.
\begin{dfn}
    An operator is a {\it $k$-support operator} if it acts on $k$ consecutive sites.
\end{dfn}
\begin{dfn}
An operator is a {\it $k$-support quantity} if it can be written by a sum of $l$-support operators with $l \leq k$ and cannot be expressed by a sum of those with $l \leq k-1$.
\end{dfn}
For instance, $X_i X_{i+1} Y_{i+2}$ is a $3$-support operator and $Z_i X_{i+4}$ is a $5$-support operator.
Both $\sumi Z_i$ and $\sumi (-1)^i Z_i$ are $1$-support quantities.
A $k$-support quantity is not necessarily a shift sum of $k$-support operators.

\begin{dfn}
    Given a system and its Hamiltonian $H$, an operator is a {\it $k$-support conserved quantity} if it is commutative with $H$ and is a $k$-support quantity.
\end{dfn}

We will refer to a $k$-support conserved quantities of $k=O(1)$ as local conserved quantities.
We characterize non-integrability by the absence of nontrivial local conserved quantities.
Here, we call $k$-support conserved quantities with $k \geq 3$ non-trivial local conserved quantities, to distinguish them from trivial local conserved quantities such as its own Hamiltonian.
Prior studies \cite{mori2018thermalization, shiraishi2019proof, chiba2024proof, park2024proof, shiraishi2024absence} have characterized non-integrability as the absence of nontrivial local conserved quantities, and this study follows the same definition.
The validity of this characterization is discussed in Sec.~\ref{sec:Validity}.

Now we are ready to state our Theorem~\ref{t:main}, which is the main theorem of this paper.
We claim that there is no unknown integrable system expressed as Eq.~\eqref{general}, and all the remaining systems not shown to be integrable are indeed non-integrable in the above sense.

\begin{theorem}\label{t:main}
Any system represented by Eq.~\eqref{general} has no $k$-support conserved quantity with $3 \leq k \leq N/2$, except for known integrable systems listed in Table~\ref{table:integrable} and their equivalents.
\end{theorem}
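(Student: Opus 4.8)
The plan is to first exploit the symmetry of the coupling matrix. Since $(J_{\alpha\beta})$ is a real symmetric matrix it is orthogonally diagonalizable, and the corresponding rotation of the spin frame is implemented by a single-site unitary that maps each Pauli operator to a linear combination of Pauli operators at the same site. Such a transformation preserves the support of every operator and sends conserved quantities to conserved quantities, so without loss of generality we may bring $H$ to the canonical form $H = \sumi (\jx + \jy + \jz) + \sumi (\hx + \hy + \hz)$. I would then organize the proof according to the rank of the diagonalized coupling matrix, that is, the number of nonvanishing eigenvalues among $J_X,J_Y,J_Z$. Rank $0$ (pure field) is immediate; ranks $1$, $2$, and $3$ correspond to the Ising-, $XY$/$XXZ$-, and $XYZ$-type families and are treated separately, because the collection of Pauli strings that can mutually cancel in the commutator differs in each case.

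In each case I would expand a hypothetical $k$-support conserved quantity $Q$ in the Pauli-string basis and single out its leading part, writing it as $\sum_i \qa\, \aki$, where the $\aki$ are the support-$k$ basis operators and $\qa$ their coefficients. The commutator $[Q,H]$ produces operators of support up to $k+1$, and since only the nearest-neighbor interaction can raise the support, the genuinely $(k+1)$-support part of $[Q,H]$---which I would write as $\sum_i \rB\,\Bki$---depends solely on the leading coefficients $\qa$ and on $J_X,J_Y,J_Z$. Imposing $[Q,H]=0$ forces every $\rB=0$, giving a homogeneous linear recursion that relates neighboring $\qa$ with weights fixed by the couplings.

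The core of the argument is then to descend through decreasing support. Having constrained the leading part via the $(k+1)$-support sector, I would next collect the $k$-support part of $[Q,H]$; this sector involves the magnetic field and couples the leading coefficients to those of the next-longest ($(k-1)$-support) part of $Q$. Continuing down in support, the accumulated homogeneous relations, combined with shift invariance and the periodic identification of site $N+1$ with site $1$, over-determine the coefficients. Off the integrable loci this forces $\qa\equiv 0$, so that $Q$ is in fact expressible with support at most $k-1$, contradicting the assumption that it is a genuine $k$-support quantity; a nontrivial solution survives precisely when $(J_X,J_Y,J_Z)$ and $(h_X,h_Y,h_Z)$ satisfy the special algebraic relations defining the models of Table~\ref{table:integrable}.

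The main obstacle is the generic rank-$3$ case with a generic field, where the $(k+1)$- and $k$-support sectors contain the largest number of competing Pauli strings and the recursion couples many coefficients simultaneously. The difficulty there is less any single computation than the exhaustive bookkeeping: one must enumerate which strings appear at each order, show that the only way to annihilate all the $\rB$ and their lower-support analogues at once is either $\qa\equiv 0$ or a fine-tuning of the couplings to an integrable point, and verify that no sporadic non-integrable solution slips through at small $k$. Because the recursion has the same structure for every $k$, ruling out a nontrivial solution at one order simultaneously excludes intermediate systems possessing only finitely many conserved quantities, which yields the dichotomy between infinitely many and none.
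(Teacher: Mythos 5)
Your outline reproduces the paper's skeleton faithfully: the same global spin rotation to diagonalize $J$, the same case division by rank, and the same two-tier analysis of the commutator (the $(k{+}1)$-support sector constraining the leading coefficients via the couplings alone, then the $k$-support sector bringing in the fields and the $(k{-}1)$-support part of $Q$). However, the proposal stops exactly where the proof begins, and two essential ideas are missing rather than merely deferred. First, the outcome of the $(k{+}1)$-sector analysis is not just ``a homogeneous linear recursion relating neighboring coefficients'': the decisive fact is that it kills the coefficient of \emph{every} $k$-support Pauli string except a very small explicit family --- strings of the form $X(Z)^{k-2}X$, $X(Z)^{k-2}Y$, $Y(Z)^{k-2}X$, $Y(Z)^{k-2}Y$ in rank 2, and the doubling-product operators in rank 3 --- and links the survivors by fixed ratios, leaving only $O(1)$ independent unknowns. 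Without this classification, your claim that the accumulated relations ``over-determine'' the coefficients has no support; a priori a homogeneous linear system can have a large null space, and indeed it does at the integrable points. Second, and more importantly, the mechanism by which non-integrable couplings force vanishing is a specific telescoping construction: one sums the $k$-sector equations for $m=0,\dots,k-2$ (each coupling one $k$-support coefficient to two $(k{-}1)$-support ones) with precisely chosen factors such as $J_X/J_Y$, so that all $(k{-}1)$-support unknowns cancel pairwise and a single relation survives, of the form $(k-2)\,h_X\,q_{(X(Z)^{k-2}X)_i}=0$ in rank 2 or $(k-1)\,h_Z\,\bigl(1-J_X/J_Y\bigr)\,q_{(Y(X)^{k-2}Z)_i}=0$ in rank 3. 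The prefactors vanish exactly at the integrable loci, which is how Table~\ref{table:integrable} emerges as the complete list of exceptions. In your write-up, ``off the integrable loci this forces $\qa\equiv 0$'' is asserted rather than derived, so at the decisive step the statement being proved is effectively used as its own justification.

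Two smaller points. The XXZ chain is a rank-3 model ($J_X=J_Y\neq J_Z$, all nonzero), not rank 2 as your grouping suggests, so the integrable exceptions you must isolate inside the rank-3 case are XXX, XXZ, and XYZ. And the rank-1 case is not ``immediate'': the non-integrability of the mixed-field Ising chain is itself a nontrivial theorem, which the paper imports from Ref.~\cite{chiba2024proof}, just as it imports Step 1 of the rank-3 analysis from Ref.~\cite{shiraishi2019proof}.
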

This theorem establishes an important fact that there is no further integrable systems beyond known ones in spin-$1/2$ chains, as long as the system is shift invariant, parity symmetric, and nearest-neighbor interacting.
This result confirms that integrable systems are exceptional and almost all systems are non-integrable, at least in the class described as Eq.~\eqref{general}.
Note that the upper bound of $k$ is almost tight, because $H^2$ is a $(\lfloor N/2 \rfloor + 2)$-support conserved quantity. 
We also note that 2-support and 1-support conserved quantities, which are ignored here as trivial, are discussed in Sec.~\ref{sec:Trivial}.

Theorem~\ref{t:main} also highlights a sharp contrast between integrable systems and other systems.
All known integrable systems have a $k$-support conserved quantity for each $k$,
which are mutually commutative. 
In contrast, all other systems described in Eq.~\eqref{general} have no $k$-support conserved quantities for all $3 \leq k \leq N/2$.
That is, systems represented by Eq.~\eqref{general} have only two extreme types of systems, one has infinite local conserved quantities and the other has no nontrivial local conserved quantities, and there is no intermediate system with a finite number of nontrivial local conserved quantities.

\section{Known integrable systems}\label{sec:Integrable}
Before going to the proof of Theorem~\ref{t:main}, we briefly review known integrable systems and the notion of quantum (non-)integrability in this and next section.
In this section, we present all known integrable systems described by Eq.~\eqref{general} in order to clarify the precise statement of Theorem~\ref{t:main}.
Here we list in Table~\ref{table:integrable} all known integrable systems.
Theorem~\ref{t:main} tells that Table~\ref{table:integrable} with a global spin transformation in fact comprises all integrable systems represented by Eq.~\eqref{general}.

\begin{table*}
\caption{List of known integrable systems in Eq.~\eqref{general}}
\label{table:integrable}
\def\arraystretch{1.5}
\begin{tabular}{|c|c|c|c|}
\hline
Method & Name & Hamiltonian $(J_X,J_Y,J_Z \neq 0)$ & Reference \\ \hline
\multirow{3}{*}{Bethe ansatz} & XXX (Heisenberg) & $\sumi 
( J_X X_i X_{i+1} + J_X Y_i Y_{i+1} + J_X Z_i Z_{i+1}
+ h Z_i)$ & \cite{1931.Bethe.ZP.71} \\ \cline{2-4}
 & XXZ & $\sumi 
( J_X X_i X_{i+1} + J_X Y_i Y_{i+1} + J_Z Z_i Z_{i+1}
+ h Z_i)$ & \cite{yang1966one} \\ \cline{2-4}
 & XYZ & $\sumi
( J_X X_i X_{i+1} + J_Y Y_i Y_{i+1} + J_Z Z_i Z_{i+1})$ & \cite{baxter1971one} \\ \hline
\multirow{2}{*}{Free fermions} & XY & $\sumi 
( J_X X_i X_{i+1} + J_Y Y_i Y_{i+1}
+ h_Z Z_i)$ & \cite{lieb1961two} \\ \cline{2-4}
 & Transverse-field Ising & $\sumi 
( J_Z Z_i Z_{i+1}
+ h_X X_i)$ & \cite{lieb1961two} \\ \hline
\multirow{2}{*}{(trivial)}  & Longitudinal-field Ising & $\sumi 
( J_Z Z_i Z_{i+1} + h_Z Z_i)$ & - \\ \cline{2-4}
 & Free spins & $\sumi 
h_Z Z_i$ & - \\ \hline
\end{tabular}
\end{table*}

Bethe-ansatz type integrable systems, including the XXX model, the XXZ model, and the XYZ model,
are solved by the quantum inverse scattering method \cite{sklyanin1992quantum} and algebraic Bethe ansatz \cite{faddeev1996algebraic}.
The XXX model is also called (isotropic) Heisenberg model,
and the XXZ model and the XYZ model are referred to as anisotropic Heisenberg models.

Free-fermion type integrable systems, including the XY model and the transverse-field Ising model,
can be mapped to free fermion systems by the Jordan-Wigner transformation \cite{Jordan1928berDP}.

The longitudinal field Ising model and the free spin model are classical spin systems. Thus, it is obvious that they have infinite local conserved quantities.

These integrable systems, in contrast to non-integrable systems, possess at least one $k$-support conserved quantity for any $k$. As an example, we here describe $k$-support conserved quantities of the Heisenberg (XXX) model for $k=3,4,5$ by following Ref.~\cite{grabowski1994quantum}. 
The $3$-support conserved quantity can be written as 
\eq{
Q_3 = \sumi \loq{
    X_i Y_{i+1} Z_{i+2} + Y_i Z_{i+1} X_{i+2} + Z_i X_{i+1} Y_{i+2} \\
   -Y_i X_{i+1} Z_{i+2} - Z_i Y_{i+1} X_{i+2} - X_i Z_{i+1} Y_{i+2}
}~. \nx
}
To simplify expression, we adopt a vector notation for the Pauli operator as ${\bm \sigma}_i = \begin{pmatrix}
X_i \\ Y_i \\ Z_i
\end{pmatrix}$.
Using this symbol, the $3$-support conserved quantity is rewritten as
\eq{
Q_3 
&=& \sumi({\bm \sigma}_i \times {\bm \sigma}_{i+1}) \cdot {\bm \sigma}_{i+2} ~,
}
where $\times$ represents the cross product.
Similarly, the $4$-support and $5$-support conserved quantities are expressed as 
\eq{
Q_4 &=& \sumi (
2 ( ({\bm \sigma}_i \times {\bm \sigma}_{i+1}) \times {\bm \sigma}_{i+2}) \cdot {\bm \sigma}_{i+3}
+ {\bm \sigma}_i \cdot {\bm \sigma}_{i+2}  
) \nx && -4H~, \\
Q_5 &=& \sumi \loq{ 6 ((({\bm \sigma}_i \times {\bm \sigma}_{i+1}) \times {\bm \sigma}_{i+2}) \times {\bm \sigma}_{i+3}) \cdot {\bm \sigma}_{i+4} \\
+ ({\bm \sigma}_i \times {\bm \sigma_{i+2}}) \cdot {\bm \sigma}_{i+3}
+ ({\bm \sigma}_i \times {\bm \sigma_{i+1}}) \cdot {\bm \sigma}_{i+3}
} \nx && -18Q_3 ~.
}
In Bethe-ansatz type integrable systems, the quantum inverse scattering method \cite{sklyanin1992quantum} provides an infinite sequence of $k$-support conserved quantities. For the XXZ model, $k$-support conserved quantities with general $k$ are provided in Refs.~\cite{grabowski1994quantum, nozawa2020explicit}, and those of the XYZ model are calculated in Ref.~\cite{grabowski1995structure, fukai2023all}. Free-fermion type integrable systems also have $k$-support conserved quantities for general $k$ \cite{grabowski1995structure}.

\section{Comparison of characterizations of non-integrability}
\label{sec:Validity}
As already mentioned, there is no unique established definition of quantum integrability and non-integrability, and various related but inequivalent characterizations have been employed in literature \cite{mori2018thermalization, caux2011remarks, gogolin2016equilibration, shiraishi2019proof}.
In this section, we briefly discuss why we cannot define quantum integrability and non-integrability easily in contrast to the classical case. 
At the same time, we also compare several characterizations of quantum integrability and non-integrability.

In classical Hamiltonian systems, there exists an established definition of integrability known as the Liouville integrability \cite{Liouville1855}, stating that a system is integrable if there are sufficiently many conserved quantities which are commutative with each other in the sense of the Poisson bracket.
The Liouville-Arnold theorem \cite{arnol1963small} states that the above definition of integrability is equivalent to the property that the system can be solved in quadratures.
Hence, in classical systems, we have two well-defined characterizations of integrability, many conserved quantities and solvability, both of which are in fact equivalent.

In contrast to the classical case, the quantum integrability and non-integrability are uneasy to define.
The difficulty comes from the fact that all quantum systems are formally solvable (by diagonalizing the Hamiltonian), and for a similar reason all quantum systems have infinitely many conserved quantities.
More precisely speaking on the latter, for the set of eigenstates $\{ |\psi_\alpha \rangle \}$ ($\alpha = 1,2,\dots 2^N$ for spin-$1/2$ chains),
all $2^N$ projection operators $|\psi_\alpha \rangle \langle \psi_\alpha |$ are conserved quantities commutative with each other.
Thus, to obtain a meaningful characterization of integrability, we need to restrict the class of solutions (when focusing on solvability) and the class of conserved quantities (when focusing on the number of2 conserved quantities).

In the case of short-range interaction Hamiltonians, one widely used characterization of quantum integrability and non-integrability is to restrict conserved quantities to local ones \cite{mori2018thermalization, shiraishi2019proof}, which we have already adopted in Sec.~\ref{sec:Setup} and Sec.~\ref{sec:Integrable}.
In this characterization, a system is said to be integrable if the system has infinitely many local conserved quantities in the thermodynamic limit.
In addition, in this paper we employ a strict characterization of non-integrability stating that a system is non-integrable if the system has no nontrivial local conserved quantity.
We emphasize that in this characterization non-integrability is not the negation of integrability, and some system may be neither integrable and non-integrable.
This happens if the system has a finite number of nontrivial local conserved quantities.
However, as we will prove, such intermediate system does not exist in the class of Hamiltonians written in Eq.~\eqref{general}.

The presence of infinitely many local conserved quantities has a strong connection to solvability.
In integrable systems, the eigenstates are specified by quantum numbers of the obtained infinite number of local conserved quantities. 
In fact, the quantum inverse scattering method, which is a standard method to solve quantum integrable systems related to the algebraic Bethe ansatz, provides an infinite number of local conserved quantities in the course of solving this system \cite{sklyanin1992quantum, faddeev1996algebraic}.

We remark that when we say infinitely many local conserved quantities, we implicitly assume that the number of local conserved quantities is sufficient for specifying the eigenstates.
Strictly speaking, there is a gap between being infinite in the thermodynamic limit and being sufficient to solve.
There is a known system whose number of local conserved quantity is infinite but insufficient \cite{hamazaki2016generalized}.

\section{Proof of Theorem~\ref{t:main}}\label{sec:Proof}

\subsection{Preliminary treatment}
\label{subsec:pt}
As preliminary to the proof of Theorem~\ref{t:main}, we first apply a {\it global spin rotation} to modify the form of the Hamiltonian \eqref{general} with keeping the generality.
The global spin rotation changes the axis of the Pauli operators on all sites simultaneously.
Namely, we apply a site-independent rotation matrix $R \in SO(3)$ to three Pauli matrices as
$\begin{pmatrix}
X_i' \\ Y_i' \\ Z_i'
\end{pmatrix}
= R 
\begin{pmatrix}
X_i \\ Y_i \\ Z_i
\end{pmatrix} ~,
$ and rename $X',Y',$ and $Z'$ as the Pauli matrices $X,Y,$ and $Z$, respectively.
Remark that the energy spectrum and the locality of conserved quantities are invariant to this transformation.

Since the system is parity symmetric, the interaction matrix $J = (J_{\alpha \beta} )$ in Eq~\eqref{general} is a (real) symmetric matrix, which can be diagonalized by a global spin rotation. 
After diagonalization, any Hamiltonian in the form of Eq.~\eqref{general} is reduced to the following standard form
\eq{
H = \sumi \loq{  \jx  +\jy +\jz \\ +\hx +\hy +\hz } ~,
\label{standard}
}
where we rename $J_{XX},J_{YY},$ and $J_{ZZ}$ as $J_X, J_Y,$ and $J_Z$, respectively.
Thus, it suffices to classify integrability and non-integrability of the system in the standard form \eqref{standard} only.

We call the number of non-zero elements in $J_X$, $J_Y$, and $J_Z$ as the rank of $J$, which plays an important role in the proof. 
Precisely, we will prove Theorem~\ref{t:main} with dividing case by the rank of $J$.

\subsection{Proof idea: Expansion of quantities.}
A key idea of the proof of the absence of local conserved quantities is expansion of a general local quantity with a Pauli basis.
Since any $2\times 2$ Hermitian matrix can be expanded by three Pauli matrices $X$, $Y$, and $Z$, and the identity $I$, a sequence of Pauli matrices serves as a basis of any observable on $S=1/2$ chains.
We introduce a {\it $k$-support Pauli string} which is a string of Pauli matrices and identity whose nontrivial consecutive support is $k$ sites \footnote{
Note that the specification of support is unique for $k(\leq N/2)$-support Pauli strings, under the rule that support is chosen as the shortest one.}.
We denote by $G^k$ the set of all $k$-support Pauli strings with no $I$ at the both ends.
Examples are $G^1 = \{X,Y,Z\}$, $G^2 = \{ XX, XY, \dots, ZY, ZZ\}$(9 elements), and $G^3 = \{ XXX, XXY, \dots , ZIY, ZIZ \}$(36 elements).

By expanding a candidate conserved quantity $Q$, the Hamiltonian $H$, and their commutator $[Q,H]$ with this Pauli strings, the conservation condition $[Q,H]=0$ gives a system of linear equations of the expansion coefficients of $Q$. 
We prove non-integrability by showing that no coefficient satisfies these equations except trivial solutions.

Precisely, any $k$-support quantity $Q$ can be expanded by $l(\leq k)$-support Pauli strings as follows:
\eq{
Q = \sum_{l=0}^k \sum_{i=1}^N \sum_{\boldA^l \in G^l} q_{\boldA^l_i} \boldA^l_i ~,
\label{input}
}
where $\boldA^l_i$ are Pauli string $\boldA$ starting from site $i$.
Note that all coefficients $\{ q_{\boldA_i^l} \}$ are real numbers.

Similarly, we expand $[Q,H]$ in the same basis. 
An important observation is that the commutator of two local Pauli strings is another local Pauli string with some coefficient, unless they are commutative.
Specifically, if a $k_1$-support operator and a $k_2$-support operator are non-commutative,
their commutator is a $k_3$-support operator with 
\eq{
|k_1 - k_2| + 1 \leq k_3 \leq k_1 + k_2 - 1 ~.
\label{triangle}
}
Since the Hamiltonian is a $2$-support quantity,
the commutator $[Q,H]$ with a $k$-support quantity $Q$ is an at-most-$(k+1)$-support quantity,
and it can be written as
\eq{
\frac{1}{2i} [ Q, H] = \sum_{l=0}^{k+1} \sum_{i=1}^N \sum_{\boldB^l \in G^l} r_{\boldB^l_i} \boldB^l_i ~,
\label{output}
}
where all coefficients $r_{\boldB^l_i}$ are real numbers due to the division by $2i$.

In the proof of non-integrability, we show that a candidate $k$-support conserved quantity $Q$ satisfies $\qa = 0$ for all $\aki$ from the conservation condition $r_{\boldB^l_i} = 0$ for all $\boldB^l_i$, which leads to the absence of $k$-support conserved quantities.
To grasp how $\qa$ and $r_{\boldB^l_i}$ are connected, let us see an example.
When $Q$ has the term $(XYZ)_i := X_i Y_{i+1} Z_{i+2}$ with a coefficient $q_{(XYZ)_i}$ and $H$ does not have $ZZ$ interaction (i.e., $J_Z=0$), the commutator $\frac{1}{2i}[Q,H]$ includes the following 11 terms:
\begin{widetext}
\eq{
& &\frac{1}{2i}[ q_{(XYZ)_i} (XYZ)_i, H] \nx
&=&  
q_{(XYZ)_i}
(
J_Y  Y_{i-1}Z_iY_{i+1}Z_{i+2}
+J_X  X_iY_{i+1}Y_{i+2}X_{i+3}
-J_Y  X_i Y_{i+1} X_{i+2} Z_{i+3}
-h_X  X_iZ_{i+1}Z_{i+2}
+h_X  X_iY_{i+1}Y_{i+2} \nx
& &\qquad\qquad+h_Y  Z_iY_{i+1}Z_{i+2}
-h_Y  X_iY_{i+1}X_{i+2}
-h_Z  Y_iY_{i+1}Z_{i+2}
+h_Z  X_iX_{i+1}Z_{i+2}
-J_X  Z_{i+1}Z_{i+2}
+J_Y  Z_iZ_{i+2})
~.
\label{demo}
}
\end{widetext}
Other operators in $Q$ may generate the same operator appearing in the right-hand side, e.g., the first term $Y_{i-1}Z_iY_{i+1}Z_{i+2}$, and by summing all such contributions $r_{Y_{i-1}Z_iY_{i+1}Z_{i+2}}$ is calculated.
Hence, the conservation condition $r_{Y_{i-1}Z_iY_{i+1}Z_{i+2}}=0$ provides a relation including $q_{(XYZ)_i}$.
Using this type of relations, we finally derive $q_{(XYZ)_i}=0$.

\bigskip

Below we shall provide the proof of Theorem~\ref{t:main} with dividing the cases according to the rank of $J$ as mentioned in Subsec.~\ref{subsec:pt}.
Since the cases of rank $0$ and rank $1$ have been resolved in the previous study \cite{chiba2024proof},
in this paper we just explain these result briefly.
We provide a detailed proof for the cases of rank $2$ and rank $3$.

\subsection{Proof for the case of rank 0} In this case, the system is a free-spin model, in which the individual spins do not interact with each other.
This fact directly implies the integrability of this model.

\subsection{Proof for the case of rank 1}
Without loss of generality, we assume that the $ZZ$ interaction is the only nonzero interaction.
By applying the global spin rotation in Subsec.~\ref{subsec:pt}, 
the $Y$-field term can be eliminated.
Thus any model can be reduced into
\eq{
H = \sumi \( \jz + \hx + \hz\)
}
with $J_Z \neq 0$.
The integrability/non-integrability of the model has already been classified as follows.
\begin{itemize}
\item If $h_X = 0$, the model is the longitudinal-field Ising model, which is integrable.
\item If $h_X \neq 0$ and $h_Z = 0$, the model is the transverse-field Ising model, which is integrable.
\item If $h_X \neq 0$ and $h_Z \neq 0$, the model is the mixed-field Ising model, which has been proven to be non-integrable in Ref.~\cite{chiba2024proof}.
\end{itemize}

\subsection{Proof for the case of rank 2} 
Without loss of generality, we assume that the $XX$ interaction and the $YY$ interaction are the nonzero interactions, which reads
\eq{
H = \sumi \loq{  \jx  +\jy \\ +\hx +\hy +\hz }
\label{standard_rank2}
}
with $J_X \neq 0$ and $J_Y \neq 0$.
In this setup, we obtain the following contrasting results:
\begin{itemize}
\item If $h_X = h_Y = 0$, the model is the XY model, which is integrable.
\item If $h_X \neq 0$ or $h_Y \neq 0$, the model is proved to be non-integrable,
i.e., it has no $k$-support conserved quantity with $3\leq k \leq N/2$.
\end{itemize}
Since $X$ and $Y$ are treated in the same way in Eq.~\eqref{standard_rank2}, the latter can be interpreted as the following lemma:
\begin{lem}
Any model in Eq.~\eqref{standard_rank2} with $h_X\neq 0$ has no $k$-support conserved quantity with $3\leq k \leq N/2$.
\label{lem1}
\end{lem}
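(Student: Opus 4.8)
The plan is to work directly with the expansion of a candidate conserved quantity $Q$ in the $k$-support Pauli basis, as set up in the excerpt, and to extract from the conservation condition $[Q,H]=0$ a system of linear recursions among the coefficients $\{q_{\boldA^l_i}\}$ that forces all top-support coefficients to vanish, then cascades downward. First I would fix $Q$ to be a $k$-support conserved quantity with $3\le k\le N/2$ and isolate the coefficients $q_{\boldA^k_i}$ of the maximal $k$-support Pauli strings. The driving observation is that the commutator with the nearest-neighbor Hamiltonian \eqref{standard_rank2} can raise the support from $k$ to $k+1$, and by Eq.~\eqref{triangle} the only way to produce a genuine $(k+1)$-support string is to act on a $k$-support string at one of its two boundary sites with a boundary term of $H$. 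Setting $r_{\boldB^{k+1}_i}=0$ for every $(k+1)$-support string $\boldB^{k+1}_i$ therefore yields relations that tie the leftmost (or rightmost) Pauli label of each maximal string in $Q$ to its neighbors, and these relations involve only $J_X$, $J_Y$, $h_X$, $h_Y$, $h_Z$, of which $J_X,J_Y,h_X$ are nonzero by hypothesis.

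The core of the argument is a boundary/bulk bootstrap. I would first analyze the $(k+1)$-support conservation conditions to constrain the edge structure of $Q$'s top-support terms. Because only $XX$ and $YY$ interactions are present, the commutator of a boundary Pauli $X$ or $Y$ with the interaction term that extends the string has a rigid structure: it flips the boundary label between $X$ and $Y$ (via the $Z$-type product that $[X,Y]$ generates) in a way controlled by $J_X,J_Y$. The presence of the transverse field $h_X$ (and possibly $h_Y$) provides an additional, support-preserving channel that mixes these boundary labels. My expectation is that combining the $(k+1)$-support equations — which have no field contributions at the very edge and hence pin down the interaction-only part — with the $k$-support equations that incorporate $h_X$, produces an overdetermined linear system on the boundary coefficients whose only solution is zero. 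Concretely, I would set up a transfer-matrix-like recursion along the chain index $i$, show its characteristic behavior is incompatible with a nonzero periodic (or bounded) solution given $h_X\neq 0$, and conclude $q_{\boldA^k_i}=0$ for all maximal strings.

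Once the $k$-support coefficients vanish, $Q$ is effectively a $(k-1)$-support quantity, and I would run an induction downward: the same commutator analysis at support level $k$ (now that the level-$k$ part of $Q$ is gone, the highest surviving support is $k-1$, whose commutator reaches $k$) forces the $(k-1)$-support coefficients to vanish, and so on, until only $2$- and $1$-support terms remain — i.e., $Q$ is trivial in the sense of the theorem. The induction is clean provided the base-level obstruction (the nonvanishing of $h_X$ breaking the symmetry that would otherwise permit a nontrivial solution) is genuinely used at each step; I would verify that the field term enters the relevant equations uniformly in the support level rather than only at the top.

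The hard part will be the combinatorial bookkeeping of which Pauli strings contribute to each $r_{\boldB^l_i}$, since a single $(k+1)$- or $k$-support target string receives contributions from several distinct terms of $Q$ acted on by different pieces of $H$ (as illustrated by the eleven-term expansion in Eq.~\eqref{demo}), and ensuring that these contributions cannot conspire to cancel for a nonzero $Q$. In particular, the delicate case is when a string in $Q$ contains internal identity insertions (elements of $G^k$ with $I$'s in the interior), because the commutator can then both raise and lower support through different sites, complicating the clean separation between the ``edge-extending'' and ``field'' channels. I anticipate this is where most of the case analysis will concentrate, and where the specific feature $h_X\neq 0$ (versus the integrable $h_X=h_Y=0$ XY model) must be invoked to close every branch.
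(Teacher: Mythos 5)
Your overall skeleton does match the paper's: a first pass using the $(k+1)$-support conditions (which, as you note, involve only the interaction terms acting on one boundary site), followed by a second pass using the $k$-support conditions where $h_X\neq 0$ must enter. The paper's Step~1 indeed carries out your first pass and resolves your worry about internal identities: all strings except the four families $X(Z)^{k-2}X$, $X(Z)^{k-2}Y$, $Y(Z)^{k-2}X$, $Y(Z)^{k-2}Y$ get zero coefficients, and the survivors obey the linear relations \eqref{linear_rank2}, leaving at most four independent unknowns.

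The genuine gap is in your mechanism for closing the argument. First, the $k$-support conditions are not an ``overdetermined linear system on the boundary coefficients'': each condition $r_{\boldsymbol{B}^k_j}=0$ couples the surviving $k$-support coefficients to \emph{new} unknowns, namely the $(k-1)$-support coefficients of $Q$ (terms such as $q_{(YY(Z)^{k-4}X)_j}$), so new equations arrive together with new unknowns and no contradiction follows from counting. Second, a periodicity or boundedness obstruction to a transfer-matrix recursion cannot be the mechanism: the relations \eqref{linear_rank2} repeat with period two around the ring and are perfectly consistent with periodic boundary conditions for any $J_X,J_Y$ --- indeed, for $h_X=h_Y=0$ (the XY model) nontrivial solutions of exactly this form exist, so whatever kills them must be local and must use $h_X\neq 0$ in an essential way. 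What the paper actually does is choose, for each target position, a finite chain of $k-2$ conditions $r_{\boldsymbol{B}^k_{i-m}}=0$, $m=1,\dots,k-2$, whose target strings are designed so that each equation contains exactly one term $h_X\,q_{\text{top}}$ and so that the $(k-1)$-support unknowns appear pairwise in consecutive equations; summing the chain with factors $J_X/J_Y$ or $1$ telescopes the $(k-1)$-support terms away and leaves $(k-2)\,h_X\,q_{(X(Z)^{k-2}X)_i}=0$ as in \eqref{zero1_rank2} (with parallel chains for the other families and for even $k$). This telescoping elimination of the lower-support unknowns is the key idea missing from your proposal. Finally, your downward induction is superfluous: by the definition of a $k$-support quantity, once all $k$-support coefficients vanish, $Q$ is expressible by operators of support at most $k-1$, which already contradicts $Q$ being a $k$-support conserved quantity; no cascade to lower supports is needed.
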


\begin{proof}[Proof of Lemma~\ref{lem1}]
Below we prove that a candidate $k$-support conserved quantity $Q$ must satisfy $\qa = 0$ for all $\aki$ [defined in Eq.~\eqref{input}], considering the conservation condition $r_{\boldB^l_i} = 0$ for all $\boldB^l_i$ [defined in Eq.~\eqref{output}] with all $l$.

In order to visually grasp commutation relations, we introduce the \textit{column expression} of commutators.
We first present an example and then explain the rule of this expression.
The following commutation relation, for example,
\eq{
[X_i Y_{i+1} Z_{i+2}, Y_{i-1} Y_i] = +2i \, Y_{i-1} Z_i Y_{i+1} Z_{i+2}~,
}
which corresponds to the first term of the right hand side of \eqref{demo}, is expressed in the column expression as
\eq{
\bmat{
 &        & X_i & Y_{i+1} & Z_{i+2} \\
 &Y_{i-1} & Y_i \\ \hline
+&Y_{i-1} & Z_i & Y_{i+1} & Z_{i+2}
}~.
}
The column expression has three horizontal rows, two arguments of a commutator are in the first and the second rows, and the resulting operator is in the last row.
A horizontal line separates the two inputs of a commutator and the outputs of it.
The horizontal positions of these three operators reflect the spatial position of the sites on which they act.
When considering a term of $[Q,H]$, we express a term of $Q$ as the first row and a term of $H$ as the second row.
Coefficient $2i$ shall be omitted.

As can be seen from Eq.~\eqref{demo}, $\frac{1}{2i}[\bullet,H]$ is a linear mapping on the space of local quantities.
Thus, the condition that $[Q,H] = 0$, i.e., $r_{\boldB^l_i} = 0$ for all $\boldB^l_i$ in Eq.~\eqref{output}, gives a system of linear equations of the coefficients $\{q_{\boldA^l_i}\}$.

We prove Lemma~\ref{lem1} by showing that the system of linear equations has no nontrivial solution.
In particular, we will show that a candidate $k$-support conserved quantity $Q$ satisfies $\qa=0$ for all $\boldA^k$ and $i$.
The proof has two steps: In Step 1, we focus on the conditions $\{q_{\boldB_j^{k+1}} = 0\}$, and derive that most of $\aki$ have zero coefficients.
In Step 2, we turn to the conditions $\{ q_{\boldB_j^k} = 0\}$, and derive that the remaining $\aki$ also have zero coefficients.

\bigskip

\paragraph*{Step 1---} 
Using the condition that $\rB = 0$ for all $\Bki$, 
we here narrow down candidates of $\aki$ whose coefficients $\qa$ may have non-zero values.
As seen from Eq.~\eqref{triangle},
a $(k+1)$-support operator $\Bki$ can be generated only by $\{ \ak_j \}$ of $Q$ and interaction terms of $H$.
Using this fact, we can obtain simple relations of $\{ \ak_j \}$.

First, we treat the case where both ends of $\ak$ are $Z$ ($\ak = Z \stard Z$). 
To treat this, we consider $\Bki = Z_i \stard Y_{i+k-1} X_{i+k}$ term generated by the commutator $[\aki,H]$, whose column expression is
\eq{
\bmat{
 & Z_i \stara Z_{i+k-1} \\
 &     \starn X_{i+k-1} & X_{i+k}\\ \hline 
+& Z_i \stara Y_{i+k-1} & X_{i+k}
\label{ZZdemo}
}~.}
We see that other terms in $[Q,H]$ cannot contribute to $\Bki = Z_i \stard Y_{i+k-1}  X_{i+k}$, which is confirmed as follows: 
Since $\Bki = Z_i \stard Y_{i+k-1}  X_{i+k}$ is a $(k+1)$-support operator, this operator appears in commutator $[Q,H]$ only in the case that the commutator consists of a $k$-support operator in $Q$ and a 2-support operator in $H$ and these two operators have an overlap only on a single site.
Hence, the following two forms of commutators are the only candidates that give rise to $\Bki = Z_i \stard Y_{i+k-1}  X_{i+k}$:
\eq{
\bmat{
  & ? & \cdots &  ? & ? & ? \\
 ?&? \\ \hline 
 Z_i \stara Y_{i+k-1} & X_{i+k}}
\qquad 
\bmat{
 ? & ? & \cdots & ? & ? \\
   &   &        &   & ? & ?\\ \hline 
Z_i \stara Y_{i+k-1} & X_{i+k}
}~. \nx}
Here, we further observe that the left commutator does not exist because $H$ has no 2-support term whose first operator is $Z$.
In addition, the right form of a commutator exists only that in Eq.~\eqref{ZZdemo} because a 2-support term in $H$ whose last operator is $X$ is uniquely determined as $XX$.
By the above discussion, we conclude that the only term in $[Q,H]$ that contributes to $\Bki = Z_i\stard Y_{i+k-1}X_{i+k}$ is $[Z_i\stard Z_{i+k-1},X_{i+k-1}X_{i+k}]$.
Thus we obtain
\eq{
J_X q_{Z_i \stard Z_{i+k-1}} = r_{Z_i \stard Y_{i+k-1} X_{i+k}} = 0
\label{immediately0}
~,}
which means that all operators whose first and last operator is $Z$ (i.e., $\ak = Z \stard Z$) has zero coefficient in $Q$.

In a similar discussion as above, we can show that all operators in $Q$ whose one end is $Z$ have zero coefficients.
Owing to the inversion symmetry, the remaining cases are $\ak = Z\stard X$ and $Z\stard Y$.
We first treat $\ak = Z \stard X$. 
Here, we consider commutators generating $\Bki = (Z \stard Z Y)_i := Z_i \stard Z_{i+k-1} Y_{i+k}$.
We find that the only contribution from $[Q,H]$ to $\Bki = (Z \stard Z Y)_i$ is
\eq{
\bmat{
 & Z_i \stara X \\
 &     \starn Y & Y \\ \hline
+& Z_i \stara Z & Y
}~
}
and no other terms of $[Q,H]$ contribute to $\Bki = (Z \stard Z Y)_i$.
As a result, $\ak = Z \stard X$ is shown to have zero coefficient.
Similarly,
\eq{
\bmat{
 & Z_i \stara Y \\
 &     \starn X & X \\ \hline
-& Z_i \stara Z & X
}
}
is the only contribution to $\Bki = (Z \stard ZX)_i$, which suggests that $\ak = Z \stard Y$ also has zero coefficient.
In summary, we conclude $q_{(Z \stard)_i} = 0$.
That is, $\qa$ may take non-zero values only if the left end of $\ak$ is $X$ or $Y$.

We shall go forward to our analysis on the second left operator of $\ak$.
We first treat $\ak = X X \stard $.
Here we notice that
\eq{
\bmat{
 & X_i & X \stara X \\
 &     &   \starn Y & Y \\ \hline
+& X_i & X \stara Z & Y
}
}
is the only contribution to $\Bki = (X X \stard Z Y)_i$, which is confirmed as follows:
The other candidate of commutators which may contribute to $\Bki = (X X \stard Z Y)_i$ takes the form of
\eq{
\bmat{
 &     & ? & ? & \cdots & ? & ? & ? \\
 & X   & X  \\ \hline
 & X_i & X \stara Z & Y
}~.
}
However, no operator on site $i+1$ satisfies this commutation relation.
Note that even if the rightmost operator is $Y$ or $Z$, we can make similar arguments by considering commutator with appropriate interaction terms.
Thus we obtain $q_{(XX \stard)_i} = 0$.

For a similar reason, in the case of $\ak = X I \stard $, we notice that
\eq{
\bmat{
 & X_i & I \stara X \\
 &     &   \starn Y & Y \\ \hline
+& X_i & I \stara Z & Y
}
}
is the only contribution to $\Bki = (X I \stard Z Y)_i$, which leads to $q_{(XI\stard)_i} = 0$.

Similar arguments hold for the case that the leftmost operator is $Y$, and we find that $\aki = (YY \stard)_i$ and $(YI \stard)_i$ have zero coefficient.
The remaining operators which may have nonzero coefficient take the form of $\aki = (XY \stard)_i$, $(XZ \stard)_i$, $(YX \stard)_i$, and $(YZ \stard)_i$.
Below, we demonstrate that $\aki = (XY \stard)_i$ (and $(YX \stard)_i$) has zero coefficient.

To this end, we consider commutators which contribute to $\Bki = (XY\stard ZY)_i$.
We notice that the following to commutators
\eq{
\bmat{
 & X_i & Y \stara X \\
 &     &   \starn Y & Y \\ \hline
+& X_i & Y \stara Z & Y
}
\qquad
\bmat{
 &     & Z \stara Z & Y \\
 & X   & X       \starn       \\ \hline
+& X_i & Y       \stara Z & Y
}~, \nx
}
contribute to $\Bki = (XY\stard ZY)_i$, which leads to the following linear relation of coefficients:
\eq{
J_Y q_{(X Y \stard X)_i} + J_X q_{(Z \stard ZY)_{i+1}} &&= r_{(XY \stard ZY)_i} \nx &&= 0 ~.
}
Recalling that the second term of the left-hand side $q_{(Z\stard ZY)_{i+1}}$ has already been shown to be zero,
we obtain $q_{(X Y \stard X)_i}=0$.
Similarly, we obtain $q_{(YX \stard)_i} = 0$.
In summary, $\qa = 0$ holds except for $\ak = XZ \stard$ and $YZ \stard$.

We shall go forward to our analysis on the third left operator of $\ak$.
Commutators which contribute to $\bk = (XZY \stard ZY)_i$ are written as
\begin{widetext}
\eq{
\bmat{
 & X_i & Z & Y \stara X \\
 &     &   &   \starn Y & Y \\ \hline
+& X_i & Z & Y \stara Z & Y
}
\qquad
\bmat{
 &     & Y & Y \stara Z & Y \\
 & X   & X &     \starn       \\ \hline
-& X_i & Z & Y       \stara Z & Y
}~, \nx
}
\end{widetext}
which leads to
\eq{
J_Y q_{(X Z Y \stard X)_i} - J_X q_{(Y Y \stard ZY)_{i+1}} = 0 ~.
}
Since $q_{(YY \stard ZY)_{i+1}} = 0$ holds, we obtain $q_{(X Z X \stard X)_i} = 0$.
Similar arguments yield
\eq{
q_{(XZI \stard X)_i} &\propto& q_{(YI \stard ZY)_{i+1}} = 0 ~,\nx
q_{(XZX \stard X)_i} &\propto& q_{(YX \stard ZY)_{i+1}} \propto q_{(Z\stard ZZX)_{i+2}} = 0 ~. \nx
\label{proportional0}
}
As a consequence, $\qa = 0$ holds except for $\ak = XZZ \stard$ and $YZZ \stard$.

By repeating similar arguments for the fourth left operator onwards, we find that $\qa = 0$ holds for all $i$ except for $\ak$ such that the both ends are $X$ or $Y$ and all the other middle operators are $Z$, i.e., $\ak = X(Z)^{k-2}X,X(Z)^{k-2}Y,Y(Z)^{k-2}X$, or $Y(Z)^{k-2}Y$.

In addition, the coefficients of the above remaining candidates satisfy the following relations:
\eq{
\frac{q_{(X(Z)^{k-2}X)_i}}{J_X} &=& \frac{q_{(Y(Z)^{k-2}Y)_{i+1}}}{J_Y} = \frac{q_{(X(Z)^{k-2}X)_{i+2}}}{J_X} = \cdots ~, \nx
q_{(X(Z)^{k-2}Y)_i} &=& -q_{(Y(Z)^{k-2}X)_{i+1}} = q_{(X(Z)^{k-2}Y)_{i+2}} = \cdots ~. \nx
\label{linear_rank2}
}
The first equality in the upper row, for example, is confirmed by considering
\eq{
\bmat{
 & X_i & (Z)^{k-2} & X \\
 &     &           & Y & Y \\ \hline
+& X_i & (Z)^{k-2} & Z & Y
}
\qquad
\bmat{
 &     & Y_{i+1} & (Z)^{k-2} & Y \\
 & X   & X                       \\ \hline
-& X_i & Z       & (Z)^{k-2} & Y
}~. \nx
}
Similar arguments confirms the other relations.

We summarize our findings through Step 1 as follows: (i) If $\ak \notin \{ X(Z)^{k-2}X,X(Z)^{k-2}Y,Y(Z)^{k-2}X,Y(Z)^{k-2}Y \}$, then $\qa = 0$ for all $i$. (ii) The remaining coefficients are not independent but connected through linear relations as in Eq.~\eqref{linear_rank2}.
Owing to these linear relations, there are at most four independent coefficients; the coefficients of $(X(Z)^{k-2}X)_{i=1}$, $(X(Z)^{k-2}X)_{i=2}$, $(Y(Z)^{k-2}X)_{i=1}$, and $(Y(Z)^{k-2}X)_{i=2}$.
Note that if the system size $N$ is odd, the coefficients of the first and the third operators are the only independent coefficients, and those of the second and the fourth are no longer independent.

\bigskip

\paragraph*{Step 2 ---} 
We next derive $\qa = 0$ for the remaining candidates $\ak$ from the conditions on $k$-support operators, i.e., $\rb = 0$ for all $\bki$.
As seen from Eq.~\eqref{triangle}, only $k$-support operators $\{ \ak_j \}$ and $(k-1)$-support operators $\{ \boldA^{k-1}_j \}$ may contribute to $\bki$.
Thanks to Eq.~\eqref{linear_rank2}, it suffices to show  $q_{(X(Z)^{k-2}X)_i} = 0$ and $q_{(X(Z)^{k-2}Y)_i} = 0$ for general $i$, which includes the case of $i=1$ and $i=2$.

We take $\bk_{i-1} = (YY(Z)^{k-3}Y)_{i-1}$ and analyze what commutators generate this $\bk_{i-1}$.
All the commutators that contribute to this $\bk_{i-1}$ are
\eq{
\bmat{
 & Y_{i-1} & Z & (Z)^{k-3} & Y \\
 &         & X &           &   \\ \hline
+& Y_{i-1} & Y & (Z)^{k-3} & Y
}
\qquad
\bmat{
 & Y_{i-1} & Y & (Z)^{k-4} & X &   \\
 &         &   &           & Y & Y \\ \hline
+& Y_{i-1} & Y & (Z)^{k-4} & Z & Y
}~,\nx
}
which gives the following equation:
\eq{
  h_X q_{(Y(Z)^{k-2}Y)_{i-1}} 
+ J_Y q_{(YY(Z)^{k-4}X)_{i-1}} = 0 ~.
\label{step2_1}
}
Here, we omit commutators with operators eliminated in Step 1.
Observe that $(Y(Z)^{k-2}Y)_{i-1}$ in the first term is a $k$-support operator and that $(YY(Z)^{k-4}X)_{i-1}$ in the second term is a $(k-1)$-support operator. 

Let us investigate other commutators which contains $\ak_{i-1} = (YY(Z)^{k-4}X)_{i-1}$ in the top row. 
Commutators generating $\bk_{i-2} = (XZY(Z)^{k-4}X)_{i-2}$ satisfy this requirement, which are expressed as
\begin{widetext}
\eq{
&&\bmat{
 & X_{i-2} & Z & Z & (Z)^{k-4} & X  \\
 &         &   & X &           &    \\ \hline
+& X_{i-2} & Z & Y & (Z)^{k-4} & X
}
\qquad
\bmat{
 &         & Y_{i-1} & Y & (Z)^{k-4} & X  \\
 & X       & X       &   &           &    \\ \hline
-& X_{i-2} & Z       & Y & (Z)^{k-4} & X
}
\qquad
\bmat{
 & X_{i-2} & Z & Y & (Z)^{k-5} & Y &   \\
 &         &   &   &           & X & X \\ \hline
-& X_{i-2} & Z & Y & (Z)^{k-5} & Z & X
}~.
}
These commutators yield
\eq{
 h_X q_{(X(Z)^{k-2}X)_{i-2}}
-J_X q_{(YY(Z)^{k-4}X)_{i-1}}
-J_X q_{(XZY(Z)^{k-5}Y)_{i-2}}
=0 ~.
\label{step2_2}
}
By adding Eqs.~\eqref{step2_1} and \eqref{step2_2} with appropriate coefficients, 
we can eliminate $q_{(YY(Z)^{k-4}X)_{i-1}}$.

\bigskip

In Step 2, by adding equations obtained from $\{ r_{\bk_j} = 0 \}$, we eliminate coefficients of $(k-1)$-support operators as above.
The goal of Step 2 is to obtain a nontrivial equation which consists only of $\{ q_{\ak_j} \}$ without $\{ q_{\boldA^{k-1}_j} \}$. 
We note that employed commutators in our proof are slightly different between the cases of even $k$ and odd $k$.

First we treat the case of odd $k$.
For $1 \leq m \leq k-2$, we exploit the conditions $r_{\bk_{i-m}} = 0$ with
\eq{
\bk_{i-m} = \left \{
\begin{array}{ll}
    (Y(Z)^{m-1}Y(Z)^{k-2-m}Y)_{i-m} & {\rm if \,} m {\rm \, is \, odd} \\
    (X(Z)^{m-1}X(Z)^{k-2-m}X)_{i-m} & {\rm if \,} m {\rm \, is \, even} 
\end{array} \right. ~.
}
Considering commutators generating these $\bk_{i-m}$'s, we obtain $k-2$ relations of coefficients $\{q_{\ak_j}\}$ and $\{ q_{\boldA^{k-1}_j} \}$, which reads
\eq{
\begin{array}{llllll}
& h_X q_{(Y(Z)^{k-2}Y)_{i-1}}
&
&+J_Y q_{(YY(Z)^{k-4}X)_{i-1}}
&=0
&(m=1)\\
& h_X q_{(X(Z)^{k-2}X)_{i-2}}
&-J_X q_{(YY(Z)^{k-4}X)_{i-1}}
&-J_X q_{(XZY(Z)^{k-5}Y)_{i-2}}
&=0
&(m=2)\\
& h_X q_{(Y(Z)^{k-2}Y)_{i-3}}
&+J_Y q_{(XZY(Z)^{k-5}Y)_{i-2}}
&+J_Y q_{(YZZY(Z)^{k-6}X)_{i-3}}
&=0
&(m=3)\\
&&& \vdots && \\
& h_X q_{(X(Z)^{k-2}X)_{i+3-k}}
&-J_X q_{(Y(Z)^{k-5}YZX)_{i+4-k}}
&-J_X q_{(X(Z)^{k-4}YY)_{i+3-k}}
&=0
&(m=k-3)\\
& h_X q_{(Y(Z)^{k-2}Y)_{i+2-k}}
&+J_Y q_{(X(Z)^{k-4}YY)_{i+3-k}}
&
&=0
&(m=k-2)~.\\
\end{array}
}
Adding these equations together with multiplying appropriate factors (concretely, multiplying $J_X/J_Y$ if $m$ is odd, and $1$ if $m$ is even) yields
\eq{
(k-2) h_X q_{ (X(Z)^{k-2}X)_i } = 0 ~.
\label{zero1_rank2}
}
Since $h_X \neq 0$ was assumed, we obtain $q_{ (X(Z)^{k-2}X)_i } = 0$ for all $i$.

We next show that $q_{ (X(Z)^{k-2}X)_i }$ and $q_{(Y(Z)^{k-2}X)_i}$ are connected by a linear relation.
For $0 \leq m \leq k-2$, we exploit the conditions $r_{\bk_{i-m}} = 0$ with
\eq{
\bk_{i-m} = \left \{
\begin{array}{ll}
    ((Z)^{k-1}X)_{i-m} & {\rm if \,} m = 0 \\
    (X(Z)^{m-1}Y(Z)^{k-2-m}Y)_{i-m} & {\rm if \,} m {\rm \, is \, odd} \\
    (Y(Z)^{m-1}X(Z)^{k-2-m}X)_{i-m} & {\rm if \,} m {\rm \, is \, postive \, even}
\end{array} \right. ~.
}
{Considering commutators generating these $\bk_{i-m}$'s, we obtain equations of $\{q_{\ak_j}\}$ and $\{ q_{\boldA^{k-1}_j} \}$ as}
\eq{
\begin{array}{llllll}
&- h_X q_{(Y(Z)^{k-2}X)_i}+h_Y q_{(X(Z)^{k-2}X)_i}
&
&-J_X q_{((Z)^{k-2}Y)_i}
&=0
&(m=0)\\
& h_X q_{(X(Z)^{k-2}Y)_{i-1}}
&+J_X q_{((Z)^{k-2}Y)_i}
&+J_Y q_{(XY(Z)^{k-4}X)_{i-1}}
&=0
&(m=1)\\
& h_X q_{(Y(Z)^{k-2}X)_{i-2}}
&+J_Y q_{(XY(Z)^{k-4}X)_{i-1}}
&-J_X q_{(YZY(Z)^{k-5}Y)_{i-2}}
&=0
&(m=2)\\
&&& \vdots && \\
& h_X q_{(Y(Z)^{k-2}X)_{i+3-k}}
&+J_Y q_{(X(Z)^{k-5}YZX)_{i+4-k}}
&-J_X q_{(Y(Z)^{k-4}YY)_{i+3-k}}
&=0
&(m=k-3)\\
& h_X q_{(X(Z)^{k-2}Y)_{i+2-k}}
&-J_X q_{(Y(Z)^{k-4}YY)_{i+3-k}}
&
&=0
&(m=k-2)~.\\
\end{array}
}
Adding these equations together with multiplying appropriate factors (concretely, $J_X/J_Y$ if $m = 0$ or odd, and $1$ if $m$ is positive even) yields
\eq{
(k-1) h_X q_{(Y(Z)^{k-2}X)_i} - h_Y q_{(X(Z)^{k-2}X)_i} = 0
\label{zero2_rank2} ~.
}
{Since $q_{(X(Z)^{k-2}X)_i}$ in the second term is shown to be zero just before and $h_X$ is assumed to be nonzero,
we obtain $q_{ (Y(Z)^{k-2}X)_i } = 0$ for all $i$.
Consequently, all four independent coefficients are shown to be zero.
Step 2 has been completed for the case of odd $k$.}

\bigskip

We next give the equations corresponding to Eqs.~\eqref{zero1_rank2} and \eqref{zero2_rank2} for the case $k$ is even.
Below we make a parallel argument to the case where $k$ is odd.

For $1 \leq m \leq k-2$, we exploit $r_{\bk_{i-m}} = 0$ with
\eq{
\bk_{i-m} = \left \{
\begin{array}{ll}
    (Y(Z)^{m-1}Y(Z)^{k-2-m}X)_{i-m} & {\rm if \,} m {\rm \, is \, odd} \\
    (X(Z)^{m-1}X(Z)^{k-2-m}Y)_{i-m} & {\rm if \,} m {\rm \, is \, even}
\end{array} \right. ~.
}
{Considering commutators generating these $\bk_{i-m}$'s, we obtain equations of $\{q_{\ak_j}\}$ and $\{ q_{\boldA^{k-1}_j} \}$ as}
\eq{
\begin{array}{llllll}
& h_X q_{(Y(Z)^{k-2}X)_{i-1}}
&
&-J_X q_{(YY(Z)^{k-4}Y)_{i-1}}
&=0
&(m=1)\\
& h_X q_{(X(Z)^{k-2}Y)_{i-2}}
&-J_X q_{(YY(Z)^{k-4}Y)_{i-1}}
&+J_Y q_{(XZY(Z)^{k-5}X)_{i-2}}
&=0
&(m=2)\\
& h_X q_{(Y(Z)^{k-2}X)_{i-3}}
&+J_Y q_{(XZY(Z)^{k-5}X)_{i-2}}
&-J_X q_{(YZZY(Z)^{k-6}Y)_{i-3}}
&=0
&(m=3)\\
&&& \vdots && \\
& h_X q_{(Y(Z)^{k-2}X)_{i+3-k}}
&+J_Y q_{(X(Z)^{k-5}YZX)_{i+4-k}}
&-J_X q_{(Y(Z)^{k-4}YY)_{i+3-k}}
&=0
&(m=k-3)\\
& h_X q_{(X(Z)^{k-2}Y)_{i+2-k}}
&-J_X q_{(Y(Z)^{k-4}YY)_{i+3-k}}
&
&=0
&(m=k-2)~.\\
\end{array}
}
Adding these equations together with multiplying appropriate factors (concretely, $-1$ if $m$ is odd, and $1$ if $m$ is even) yields
\eq{
(k-2) h_X q_{ (X(Z)^{k-2}Y)_i } = 0 ~.
\label{zero3_rank2}
}
Since $h_X \neq 0$ was assumed, we obtain $q_{ (X(Z)^{k-2}Y)_i } = 0$ for all $i$.

Next, we shall show that the remaining coefficient $q_{(Y(Z)^{k-2}Y)_i}$ is zero.
For $0 \leq m \leq k-2$, we exploit the conditions $r_{\bk_{i-m}} = 0$ with
\eq{
\bk_{i-m} = \left \{
\begin{array}{ll}
    ((Z)^{k-1}Y)_{i-m} & {\rm if \,} m = 0 \\
    (X(Z)^{m-1}Y(Z)^{k-2-m}X)_{i-m} & {\rm if \,} m {\rm \, is \, odd} \\
    (Y(Z)^{m-1}X(Z)^{k-2-m}Y)_{i-m} & {\rm if \,} m {\rm \, is \, postive \, even}
\end{array} \right. ~.
}
{Considering equations of $\{q_{\ak_j}\}$ and $\{ q_{\boldA^{k-1}_j} \}$ which generate these $\bk_{i-m}$'s, we obtain}
\eq{
\begin{array}{llllll}
&- h_X q_{(Y(Z)^{k-2}Y)_i}+h_Y q_{(X(Z)^{k-2}Y)_i}
&
&+J_Y q_{((Z)^{k-2}X)_i}
&=0
&(m=0)\\
& h_X q_{(X(Z)^{k-2}X)_{i-1}}
&+J_X q_{((Z)^{k-2}X)_i}
&-J_X q_{(XY(Z)^{k-4}Y)_{i-1}}
&=0
&(m=1)\\
& h_X q_{(Y(Z)^{k-2}Y)_{i-2}}
&+J_Y q_{(XY(Z)^{k-4}Y)_{i-1}}
&+J_Y q_{(YZY(Z)^{k-5}X)_{i-2}}
&=0
&(m=2)\\
&&& \vdots && \\
& h_X q_{(X(Z)^{k-2}X)_{i+3-k}}
&-J_X q_{(Y(Z)^{k-5}YZX)_{i+4-k}}
&-J_X q_{(X(Z)^{k-4}YY)_{i+3-k}}
&=0
&(m=k-3)\\
& h_X q_{(Y(Z)^{k-2}Y)_{i+2-k}}
&+J_Y q_{(X(Z)^{k-4}YY)_{i+3-k}}
&
&=0
&(m=k-2)~.\\
\end{array}
}
\end{widetext}
Adding these equations together with multiplying appropriate factors (concretely, $-1$ if $m = 0$, $J_Y/J_X$ if $m$ is odd, and $1$ if $m$ is positive even) yields
\eq{
(k-1) h_X q_{(Y(Z)^{k-2}Y)_i} - h_Y q_{(X(Z)^{k-2}Y)_i} = 0 ~.
\label{zero4_rank2}
}
Since the second term on the left-hand side is zero and $h_X \neq 0$, we obtain $q_{ (Y(Z)^{k-2}Y)_i } = 0$ for all $i$.
Note that all the above discussion in Step 2 is independent of the values of $h_Y$ and $h_Z$.

Combining Step 1 and Eqs.~\eqref{zero1_rank2}, \eqref{zero2_rank2}, \eqref{zero3_rank2} and \eqref{zero4_rank2}, we conclude $\qa = 0$ for all $k$-support operators $\aki$.
This completes the proof of the absence of $k$-support conserved quantity for systems represented by Eq.~\eqref{standard_rank2}.

\end{proof}

\subsection{Proof for the case of rank 3}
The standard form of Hamiltonians in this case is
\eq{
H = \sumi \begin{array}{l}  (\jx  +\jy +\jz \\ +\hx +\hy +\hz) \end{array} ~,
\label{standard_rank3}
}
with $J_X, J_Y, J_Z \neq 0$.
Here we summarize our result of integrability/non-integrability of this model.
\begin{itemize}
\item If $J_X=J_Y=J_Z$ holds, then the system is the XXX model.
This model is integrable, regardless of the values of $h_X, h_Y$ and $h_Z$.
\item If $J_X = J_Y \neq J_Z$ holds,
\begin{itemize}
\item if $h_X = h_Y = 0$, the system is the XXZ model.
This model is integrable.
\item if $h_X \neq 0$ or $h_Y \neq 0$, the systems is non-integrable, as will be shown in Lemma~\ref{lem2}.
\end{itemize}
If $J_Y = J_Z \neq J_X$ or $J_Z = J_X \neq J_Y$ hold, the system can be reduced to the case of $J_X = J_Y \neq J_Z$ by a global spin rotation.
\item If $J_X,J_Y,$ and $J_Z$ are all different,
\begin{itemize}
\item if $h_X = h_Y = h_Z = 0$, the system is the XYZ model.
This model is integrable.
\item if $h_X \neq 0$ or $h_Y \neq 0$ or $h_Z \neq 0$, the system is non-integrable as will be shown in Lemma~\ref{lem2}.
\end{itemize}
\end{itemize}

The non-integrability of the two cases listed above follows immediately from the following lemma.

\begin{lem}
Any model in Eq.~\eqref{standard_rank3} with $J_X \neq J_Y$ and $h_Z \neq 0$ has no $k$-support conserved quantity with $3 \leq k \leq N/2$.
\label{lem2}
\end{lem}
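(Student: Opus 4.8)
The plan is to reuse the two-step architecture that proved Lemma~\ref{lem1}, now carrying all three interactions $J_X,J_Y,J_Z\neq 0$. I would expand a candidate $k$-support conserved quantity $Q$ in the Pauli-string basis of Eq.~\eqref{input}, expand $\frac{1}{2i}[Q,H]$ as in Eq.~\eqref{output}, and read off the linear system $\{r_{\boldB^l_i}=0\}$ imposed by conservation, bookkeeping the commutators with the column expression. The essential new feature relative to the rank-2 case is that $H$ now contains a $ZZ$ interaction, so the end-extensions that raise the support to $k+1$ occur in richer combinations: a string terminating in $X$ or $Y$, which admitted only a single extension in Lemma~\ref{lem1}, now admits a second one through $ZZ$. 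Since the magnetic fields are $1$-support and cannot raise the support from $k$ to $k+1$ [cf.\ Eq.~\eqref{triangle}], the entire narrowing in Step~1 is controlled by $(J_X,J_Y,J_Z)$ alone and is insensitive to $h_X,h_Y,h_Z$.

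In Step~1 I would impose $\rB=0$ on the top sector and sweep the leftmost operator of $\ak$ inward, exactly as in Eqs.~\eqref{immediately0}--\eqref{proportional0}, to eliminate most strings and to read off the linear relations among the survivors. The decisive departure from rank~2 is that the $ZZ$ interaction now supplies a left-extension commutator that was forbidden there: the argument ``$H$ has no $2$-support term whose first operator is $Z$'' no longer applies, so each top-sector condition couples additional coefficients and new conditions become available. Consequently the surviving family is larger, and its defining relations are denser, than the clean $X(Z)^{k-2}X$-type set of Eq.~\eqref{linear_rank2}; I expect the endpoint to again be a finite number of independent coefficients, but now with relations governed by the ratios among $J_X,J_Y,J_Z$. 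Pinning down this surviving family together with its exact relations is where the bulk of the work lies, and it is the step I expect to be the main obstacle, precisely because the inductive elimination must now track several competing commutators at each position rather than the one or two of Lemma~\ref{lem1}.

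In Step~2 I would feed the survivors into the $k$-support conditions $\rb=0$ and form telescoping sums---organized separately for even and odd $k$, as in Eqs.~\eqref{zero1_rank2}--\eqref{zero4_rank2}---that cancel all $(k-1)$-support coefficients and leave equations purely among $\{q_{\ak_j}\}$. This is where the two hypotheses should be consumed. The anisotropy $J_X\neq J_Y$ is what I expect to prevent the cancellations that would otherwise leave a nontrivial kernel: when $J_X=J_Y$ these cancellations do occur, which is consistent with the XXZ chain remaining integrable even in a longitudinal field and hence retaining genuine conserved quantities. The field $h_Z\neq 0$ should then furnish the nonzero prefactor that forces each surviving coefficient to vanish, playing the role that $h_X\neq 0$ played in Eq.~\eqref{zero1_rank2}. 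Once the Step~1 family and its relations are fixed, this telescoping should be a mechanical if lengthy computation whose conclusion is guaranteed by $J_X\neq J_Y$ together with $h_Z\neq 0$.
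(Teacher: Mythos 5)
Your overall architecture is the paper's: expand $Q$ and $[Q,H]$ in Pauli strings, narrow the $k$-support candidates via the $(k+1)$-support conditions, then telescope the $k$-support conditions. Two of your structural observations are also exactly right: Step~1 is insensitive to the fields (a $1$-support term cannot raise support to $k+1$), and both hypotheses are consumed in Step~2 --- in the paper the telescoping sums collapse to $(k-1)\, h_Z \left( 1 - J_X/J_Y \right) q_{(Y(X)^{k-2}Z)_i} = 0$ for odd $k$ (and the analogous relation with $(Y(X)^{k-2}Y)_i$ for even $k$), so the anisotropy and the field enter as factors of a single prefactor, consistent with your XXZ sanity check. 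The genuine gap is precisely the point you flag as ``the main obstacle'' and then leave open: you never determine which $k$-support operators survive Step~1, and without that classification Step~2 cannot even be formulated. The answer is not merely ``a larger, denser family with relations governed by the coupling ratios''; it has a sharp characterization: the survivors are exactly the \emph{doubling-product operators}, i.e., operators of the form $\overline{P^{(1)}}_i\,\overline{P^{(2)}}_{i+1}\cdots$ built from adjacent-pair factors $\overline{P}_j = P_j P_{j+1}$ with neighboring doubled Paulis distinct (e.g., $(XYZ)_i = (\overline{XZ})_i$, $(ZYYZ)_i = (\overline{ZXZ})_i$), and all of their coefficients are tied together by proportionality relations playing the role of Eq.~\eqref{linear_rank2}. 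Asserting that the inductive elimination ``should terminate in finitely many independent coefficients'' is not a proof of this; the multi-commutator bookkeeping you correctly anticipate as hard is the combinatorial core of the rank-3 case.

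The paper closes this gap not by redoing the induction but by noting that Step~1 of Ref.~\cite{shiraishi2019proof} (the XYZ chain in a field) applies verbatim --- your own field-insensitivity remark is exactly the reason it transfers --- and importing the doubling-product classification wholesale. Once that structure is granted, Step~2 is indeed the mechanical telescoping you describe: one takes ${\boldsymbol{B}}^k_{i-m}$ of the form $((X)^{k-1}Z)_{i-m}$, $(Z(X)^{m-1}Y(X)^{k-2-m}Y)_{i-m}$, $(Y(X)^{m-1}Y(X)^{k-2-m}Z)_{i-m}$ for odd $k$ (analogous strings for even $k$), checks that $h_X$ and $h_Y$ drop out of these conditions when the top-row operators are doubling products, and sums with signs $\pm 1$ (and a $-J_Y/J_Z$ weight in the even case) to cancel the $(k-1)$-support coefficients. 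So your plan is the right plan, but as written it is a scaffold around an unproven central claim; to complete it you must either prove the doubling-product classification for the rank-3 Hamiltonian \eqref{standard_rank3} or explicitly invoke the known Step~1 result of Ref.~\cite{shiraishi2019proof}.
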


\begin{proof}[Proof of Lemma~\ref{lem2}]
The basic strategy is the same as that for the proof of Lemma~\ref{lem1}.
We show that a candidate $k$-support conserved quantity $Q$ satisfies $\qa = 0$ for all $\aki$,
using $\rB = 0$ for all $\Bki$ in Step 1 and $\rb = 0$ for all $\bki$ in Step 2.
To explain in detail, in Step 1, we obtain that coefficients of most $k$-support operators are zero, except a specific form of operators called doubling-product operators.
We also obtain that all the remaining coefficients of doubling-product operators are in the linear relation.
In Step 2, we show that one of the remaining coefficient is zero, which leads to the fact that all coefficients are zero.

\bigskip
\paragraph*{Step 1 ---} 
Since this discussion is exactly the same as Step 1 in \cite{shiraishi2019proof}, we here just describe the results and omit its proof.

For convenience,
we abbreviate $P_i P_{i+1}$ as $\overline{P}_i$ for any $P \in \{X,Y,Z\}$,
and introduce the product of $\overline{P}_i$ on consecutive sites.
We promise that a neighboring $\overline{P}$ sits on sites with one-site shift.
For example, $(\overline{XYX})_i$ represents
\eq{
(\overline{XYX})_i 
&=& c \overline{X}_i \overline{Y}_{i+1} \overline{X}_{i+2} \nx
&=& c (X_i X_{i+1}) (Y_{i+1} Y_{i+2}) (X_{i+2} X_{i+3}) \nx
&=& X_i Z_{i+1} Z_{i+2} X_{i+3} \nx
&=& (XZZX)_i ~,
}
where $c$ is one of $\{ \pm 1, \pm i \}$ to normalize the phase factor to $+1$.
We promise that $\overline{P}_j$ and $\overline{P}_{j+1}$ with the same Pauli operator $P$ do not neighbor.
We call operators which can be expressed in the above form as \textit{doubling-product operators}.
Examples of doubling-product operators are $(X Y Z)_i = (\overline{XZ})_i$ and $(ZYYZ)_i = (\overline{ZXZ})_i$.

In Ref.~\cite{shiraishi2019proof}, through the argument of Step 1 the following two statements are proven: (i) If $\aki$ is not a doubling-product operator, then $\qa=0$.
(ii) Each coefficient $\qa$ of a doubling-product operator is proportional to other remaining coefficients.
The proof strategy is similar to Step 1 of the rank 2 case.
Some $(k+1)$-support operator $\bk$ are generated only by a single commutator with $k$-support operator in $Q$ and 2-support operator in $H$, which leads to the fact that the coefficients $\qa$ of operators in $Q$ appearing in the above commutators are immediately shown to be zero as Eq.~\eqref{immediately0}.
In addition, some other $(k+1)$-support operator $\bki$ are generated by two commutators with $k$-support operators in $Q$ and 2-support operators in $H$, which suggests linear relations of coefficients.
Using these linear relations repeatedly, we find that most of the coefficients of $k$-support operators are shown to be zero as Eq.~\eqref{proportional0}, and the remaining coefficients are all connected by linear relations.
Readers who are interested in the details of its proof are invited to Ref.~\cite{shiraishi2024absence}, which provide pedagogical explanation of the proof.

\bigskip
\paragraph*{Step 2 ---} 
We next derive $\qa = 0$ for the remaining candidates $\aki$ from $\rb = 0$ for all $\bki$.

Depending on whether $k$ is odd or even, we focus on different $\bk_{i}$. When $k$ is odd,  we exploit the conditions $r_{\bk_{i-m}} = 0$ for $0 \leq m \leq k-2$ with
\eq{
\bk_{i-m} = \left \{
\begin{array}{ll}
    ((X)^{k-1}Z)_{i-m} & {\rm if \,} m = 0 \\
    (Z(X)^{m-1}Y(X)^{k-2-m}Y)_{i-m} & {\rm if \,} m {\rm \, is \, odd} \\
    (Y(X)^{m-1}Y(X)^{k-2-m}Z)_{i-m} & {\rm if \,} m {\rm \, is \, postive \, even}
\end{array} \right. ~.
}
Considering commutators generating these $\bk_{i-m}$'s, we obtain the following equations of $\{ q_{\ak_j} \}$ and $\{ q_{\boldA^{k-1}_j} \}$:
\begin{widetext}
\eq{
\! \! \! \!
\begin{array}{lllllll}
& h_Z q_{(Y(X)^{k-2}Z)_i}
&+h_Z q_{(XY(X)^{k-3}Z)_i}
&
&+J_Z q_{((X)^{k-2}Y)_i}
&=0
&(m=0)\\
&- h_Z q_{(Z(X)^{k-2}Y)_{i-1}}
&+h_Z q_{(ZYY(X)^{k-4}Y)_{i-1}}
&-J_Z q_{((X)^{k-2}Y)_{i}}
&-J_Y q_{(ZY(X)^{k-5}Y)_{i-1}}
&=0
&(m=1)\\
&- h_Z q_{(Y(X)^{k-2}Z)_{i-2}}
&+h_Z q_{(YXYY(X)^{k-5}Z)_{i-2}}
&-J_Y q_{(ZY(Z)^{k-4}X)_{i-1}}
&+J_Z q_{(YXY(Z)^{k-5}Y)_{i-2}}
&=0
&(m=2)\\
&&& \vdots && \\
&- h_Z q_{(Y(X)^{k-2}Z)_{i+3-k}}
&+h_Z q_{(Y(X)^{k-4}YYZ)_{i+3-k}}
&-J_Y q_{(Z(X)^{k-5}YXZ)_{i+4-k}}
&+J_Z q_{(Y(X)^{k-4}YY)_{i+3-k}}
&=0
&(m=k-3)\\
&- h_Z q_{(Z(X)^{k-2}Y)_{i+2-k}}
&- h_Z q_{(Z(X)^{k-3}YX)_{i+2-k}}
&-J_Z q_{(Y(X)^{k-4}YY)_{i+3-k}}
&
&=0
&(m=k-2)~.
\end{array} \nx
}

Note that $h_X$ and $h_Y$ do not appear in the equations, since $\frac{1}{2i}[Q,\sumi \hx]$ and $ \frac{1}{2i}[Q,\sumi \hy]$ do not contribute to these $\bk_{i-m}$'s, under the situation where $\aki$'s are doubling-product operators.
Adding these equations together with multiplying appropriate factors (concretely, we multiply $1$ if $m = 0$ or $m$ is odd, and multiply $-1$ if $m$ is positive even) yields
\eq{
(k-1) h_Z \( 1- \frac{J_X}{J_Y} \) q_{ (Y(X)^{k-2}Z)_i } = 0 ~,
}
where we used the linear relations between the coefficients of the $k$-support doubling-product operators obtained in Step~1.
Since we assume $h_Z\neq 0$ and $J_X\neq J_Y$, the above relation means that the coefficient of a $k$-support doubling-product operator is zero, which leads to the desired result that all the $k$-support doubling-product operators have zero coefficients for odd $k$.

When $k$ is even, we exploit the conditions $r_{\bk_{i-m}} = 0$ for $0 \leq m \leq k-2$ with
\eq{
\bk_{i-m} = \left \{
\begin{array}{ll}
    ((X)^{k-1}Y)_{i-m} & {\rm if \,} m = 0 \\
    (Z(X)^{m-1}Y(X)^{k-2-m}Z)_{i-m} & {\rm if \,} m {\rm \, is \, odd} \\
    (Y(X)^{m-1}Y(X)^{k-2-m}Y)_{i-m} & {\rm if \,} m {\rm \, is \, postive \, even}
\end{array} \right. ~.
}
Considering commutators generating these $\bk_{i-m}$'s, we obtain equations of $\{ q_{\ak_j} \}$ and $\{ q_{\boldA^{k-1}_j} \}$ as
\eq{
\! \! \! \!
\begin{array}{lllllll}
& h_Z q_{(Y(X)^{k-2}Y)_i}
&+h_Z q_{(XY(X)^{k-3}Y)_i}
&
&-J_Y q_{((X)^{k-2}Z)_i}
&=0
&(m=0)\\
&- h_Z q_{(Z(X)^{k-2}Z)_{i-1}}
&+h_Z q_{(ZYY(X)^{k-4}Z)_{i-1}}
&-J_Z q_{((X)^{k-2}Z)_{i}}
&+J_Z q_{(ZY(X)^{k-5}Y)_{i-1}}
&=0
&(m=1)\\
&- h_Z q_{(Y(X)^{k-2}Y)_{i-2}}
&+h_Z q_{(YXYY(X)^{k-5}Y)_{i-2}}
&-J_Y q_{(ZY(Z)^{k-4}Y)_{i-1}}
&+J_Y q_{(YXY(Z)^{k-5}Z)_{i-2}}
&=0
&(m=2)\\
&&& \vdots && \\
&- h_Z q_{(Z(X)^{k-2}Z)_{i+3-k}}
&+h_Z q_{(Z(X)^{k-4}YYZ)_{i+3-k}}
&-J_Z q_{(Y(X)^{k-5}YXZ)_{i+4-k}}
&+J_Z q_{(Z(X)^{k-4}YY)_{i+3-k}}
&=0
&(m=k-3)\\
&- h_Z q_{(Y(X)^{k-2}Y)_{i+2-k}}
&- h_Z q_{(Y(X)^{k-3}YX)_{i+2-k}}
&-J_Y q_{(Z(X)^{k-4}YY)_{i+3-k}}
&
&=0
&(m=k-2)\\
\end{array} ~.
}
\end{widetext}
Adding these equations together with multiplying appropriate factors (concretely, $1$ if $m = 0$, $-J_Y/J_Z$ if $m$ is odd, and $-1$ if $m$ is positive even) yields
\eq{
(k-1) h_Z \( 1- \frac{J_X}{J_Y} \) q_{ (Y(X)^{k-2}Y)_i} = 0 ~,
}
where we used the linear relations between the coefficients of the $k$-support doubling-product operators obtained in Step~1.
Since we assume $h_Z\neq 0$ and $J_X\neq J_Y$, the above relation means that the coefficient of a $k$-support doubling-product operator is zero, which leads to the desired result that all the $k$-support doubling-product operators have zero coefficients for even $k$.

Thus, if $h_z \neq 0$ and $ 1 - \frac{J_X}{J_Y}  \neq 0$ holds, we have $\qa = 0$ for one of the $k$-support doubling-product operators,
and combining with the result of Step 1, 
we conclude $\qa = 0$ for all $k$-support operators $\aki$.
This means that the system has no $k$-support conserved quantity (with $3 \leq k \leq N/2$).
\end{proof}

\section{Short-support conserved quantities of non-integrable systems}\label{sec:Trivial}

In the previous sections, we do not care about $k$-support conserved quantities with $k \leq 2$ as they are ``trivial'' local conserved quantities.
Indeed, the mere existence of $k$-support conserved quantities with $k\leq 2$ is  far from sufficient for the solvability of systems. 

However, it is important to consider such short-support conserved quantities of non-integrable systems in the study of thermalization and level statistics.
With these motivations, we investigate $k(\leq 2)$-support conserved quantities of non-integrable systems described in the form of Eq.~\eqref{general}.

Common short-support conserved quantities are the Hamiltonian ($2$-support) and the identity operator ($0$-support).
Although most non-integrable systems described in the form of Eq.~\eqref{general} have only these two local conserved quantities, some systems have additional local conserved quantities.

We first present the result (pairs of a non-integrable Hamiltonian and a short-support conserved quantity) and then prove that no further short-support conserved quantity exists.
There are two pairs of $(H,Q)$:
One is
\eq{
H &=& \sumi \Big( J_X \Big( X_i X_{i+1} -  \Big(\frac{h_Y}{h_X}\Big)^2 Y_i Y_{i+1} \Big) + h_X X_i + h_Y Y_i \Big) ~, \nx
Q &=& \sumi (-1)^i (h_X X_i + h_Y Y_i) (h_X X_{i+1} + h_Y Y_{i+1}) ~
\label{trivial_rank2}
}
with even $N$ and nonzero $J_X, h_X,$ and $h_Y$.
The other is
\eq{
H &=& \sumi ( J_X (X_i X_{i+1} - Y_i Y_{i+1}) + \jz  + \hz ) ~, \nx
Q &=& \sumi (-1)^i Z_i ~,
\label{trivial_rank3}
}
with even $N$ and nonzero $J_X, J_Z,$ and $h_Z$.
We can easily confirm that these $(H,Q)$ are commutative.
In the following, we prove that there is no other pair of a non-integrable system and its short-support conserved quantity.

\subsection{Rank 1}
Short-support local conserved quantities of non-integrable systems with rank 1 Hamiltonian have been analyzed in Ref.~\cite{chiba2024proof}. In this previous study, it is proved that no short-support conserved quantity exists other than $H$ and $I$.

\subsection{Rank 2}
The standard form of the Hamiltonian in this case is 
\eq{
H = \sumi \loq{  \jx  +\jy \\ +\hx +\hy +\hz }
}
with $J_X,J_Y \neq 0$. 
We consider the case where the system is non-integrable ($(h_X,h_Y)\neq(0,0)$).

First, we explore $2$-support conserved quantities other than $H$.
By the same argument as Step 1 for $k \geq 3$, we obtain
\eq{
q_{(ZX)_i} = q_{(ZY)_i} = q_{(ZZ)_i} =  q_{(XZ)_i} = q_{(YZ)_i} &=& 0
} and \eq{
\frac{q_{(XX)_i}}{J_X} = \frac{q_{(YY)_{i+1}}}{J_Y} = \frac{q_{(XX)_{i+2}}}{J_X} &=& \cdots ~, \label{trivial_rank2_-1} 
\\
q_{(XY)_i} = -q_{(YX)_{i+1}} = q_{(XY)_{i+2}} &=& \cdots ~, \label{trivial_rank2_0}
}
for all $i$. 

Here, by adding a $2$-support conserved quantity $H$ with an appropriate factor if necessary, without loss of generality a $2$-support conserved quantity is assumed to satisfy
\eq{
q_{X_1 X_2} = - q_{X_2 X_3} ~.
}
Combining this with Eq.~\eqref{trivial_rank2_-1}, we have
\eq{
q_{(XX)_i} = (-1)^n q_{(XX)_{i+n}}
\label{trivial_rank2_assume}
}
for all $i$ and $n$.

Let us consider $(XZ)_i$ in $[Q,H]$.
This operator is generated by the following three commutator:
\eq{
\bmat{
 & X_i & Y_{i+1}  \\
 &     & X_{i+1}  \\ \hline
-& X_i & Z_{i+1}
}
\qquad
\bmat{
 & X_i & X_{i+1}  \\
 &     & Y_{i+1}  \\ \hline
+& X_i & Z_{i+1}
}
\qquad
\bmat{
 &     & Y_{i+1}  \\
 & X_i & X_{i+1}  \\ \hline
-& X_i & Z_{i+1}
}~,
}
which implies
\eq{
-h_X q_{(XY)_i} + h_Y q_{(XX)_i} - J_X q_{Y_{i+1}} = 0 ~.
\label{trivial_rank2_1}
}
Similarly, $(ZX)_{i+1}$ in $[Q,H]$ is generated by the following three commutators:
\eq{
\bmat{
 & Y_{i+1} & X_{i+2}  \\
 & X_{i+1} &  \\ \hline
-& Z_{i+1} & X_{i+2}
}
\qquad
\bmat{
 & X_{i+1} & X_{i+2}  \\
 & Y_{i+1} &  \\ \hline
+& Z_{i+1} & X_{i+2}
}
\qquad
\bmat{
 & Y_{i+1} &  \\
 & X_{i+1} & X_{i+2} \\ \hline
-& Z_{i+1} & X_{i+2}
}~, \nx
}
which implies
\eq{
-h_X q_{(YX)_{i+1}} + h_Y q_{(XX)_{i+1}} - J_X q_{Y_{i+1}} = 0 ~.
\label{trivial_rank2_2}
}
Substituting Eqs.~\eqref{trivial_rank2_0} and \eqref{trivial_rank2_assume} into Eq.~\eqref{trivial_rank2_2} and comparing with Eq.~\eqref{trivial_rank2_1}, we obtain
\eq{
\label{trivial_rank2_3}
-2h_X q_{(XY)_i} + 2h_Y q_{(XX)_i} &=& 0 ~, \\
\label{trivial_rank2_5}
q_{Y_{i+1}} &=& 0 ~.
}

To obtain further relations, let us consider $(YZ)_i$ and $(ZY)_{i+1}$ in $[Q,H]$.
$(YZ)_i$ is generated by
\eq{
\bmat{
 & Y_i & X_{i+1}  \\
 &     & Y_{i+1}  \\ \hline
+& Y_i & Z_{i+1}
}
\qquad
\bmat{
 & Y_i & Y_{i+1}  \\
 &     & X_{i+1}  \\ \hline
-& Y_i & Z_{i+1}
}
\qquad
\bmat{
 &     & X_{i+1}  \\
 & Y_i & Y_{i+1}  \\ \hline
+& Y_i & Z_{i+1}
}~,
}
which implies 
\eq{
\label{trivial_rank2_1.5}
h_Y q_{(YX)_i} - h_X q_{(YY)_i} + q_Y h_{X_{i+1}} &=& 0 ~.
}
$(ZY)_{i+1}$ is generated by
\eq{
\bmat{
 & X_{i+1} & Y_{i+2}  \\
 & Y_{i+1} &  \\ \hline
+& Z_{i+1} & Y_{i+2}
}
\qquad
\bmat{
 & Y_{i+1} & Y_{i+2}  \\
 & X_{i+1} &  \\ \hline
-& Z_{i+1} & Y_{i+2}
}
\qquad
\bmat{
 & X_{i+1} &  \\
 & Y_{i+1} & Y_{i+2} \\ \hline
+& Z_{i+1} & X_{i+2}
} ~, \nx
}
which implies
\eq{
\label{trivial_rank2_2.5}
h_Y q_{(XY)_{i+1}} - h_X q_{(YY)_{i+1}} + J_Y q_{X_{i+1}} &=& 0 ~.
}
Substituting Eqs.~\eqref{trivial_rank2_0} and \eqref{trivial_rank2_assume} into Eq.~\eqref{trivial_rank2_2.5} and comparing with Eq.~\eqref{trivial_rank2_1.5}, we obtain
\eq{
\label{trivial_rank2_4}
2h_Y q_{(YX)_i} - 2h_X q_{(YY)_i} &=& 0~, \\
\label{trivial_rank2_6}
q_{X_{i+1}} &=& 0 ~.
}

Combining Eqs.~\eqref{trivial_rank2_0}, \eqref{trivial_rank2_3}, and \eqref{trivial_rank2_4}, we obtain
\eq{
&& q_{(XX)_i} : q_{(XY)_i} : q_{(YX)_{i+1}} : q_{(YY)_{i+1}} \nx
&=& h_X^2 : h_X h_Y : (-h_X h_Y) : (-h_Y^2) ~.
\label{trivial_rank2_7}
}
Moreover, combining Eqs.~\eqref{trivial_rank2_assume} and \eqref{trivial_rank2_7}, we obtain that all the remaining coefficients of $2$-support operators are connected by linear relations as
\eq{
&& q_{(XX)_i} : q_{(XY)_i} : q_{(YX)_i} : q_{(YY)_i} \nx
&&:q_{(XX)_{i+n}} : q_{(XY)_{i+n}} q_{(YX)_{i+n}} : q_{(YY)_{i+n}} \nx
&=& h_X^2 : h_X h_Y : h_X h_Y : h_Y^2 \nx
&&:(-1)^n h_X^2 : (-1)^n h_X h_Y : (-1)^n h_X h_Y : (-1)^n h_Y^2 \nx
\label{trivial_rank2_8}
}
for any $i$ and $n$.
Therefore, $q_{(XX)_i}$ must be nonzero for a $2$-support conserved quantity independent of $H$.
Due to Eq.~\eqref{trivial_rank2_-1}, the existence of a 2-support conserved quantity independent of $H$ requires
\eq{
J_X : J_Y = h_X^2 : (-h_Y^2) ~,
}
which also indicates that all of $q_{(XX)_i}$, $q_{(XY)_i}$, $q_{(YX)_i}$, and $q_{(YY)_i}$ are nonzero.

Meanwhile,
\eq{
\bmat{
 & Y_i & X_{i+1} \\
 & Z_i &  \\ \hline
+& X_i & X_{i+1}
}
\qquad
\bmat{
 & X_i & Y_{i+1} \\
 &     & Z_{i+1} \\ \hline
+& X_i & X_{i+1}
}
}
yields
\eq{
h_Z q_{(YX)_i} + h_Z q_{(XY)_i} = 0 ~.
}
Since $q_{(XY)_i}(=q_{(YX)_i})$ is nonzero,
we find that
\eq{h_Z = 0}
should be satisfied if a $2$-support conserved quantity exists other than $H$.

Finally, we examine $q_{Z_i}$.
Due to Eqs.~\eqref{trivial_rank2_5} and \eqref{trivial_rank2_6},
the only commutator which generates $Y_i$ is
\eq{
\bmat{
&Z_i \\ &X_i \\ \hline +&Y_i
}~,
}
which implies 
\eq{
h_X q_{Z_i} &=& 0 ~.}
The only commutator which generates $X_i$ is
\eq{
\bmat{
&Z_i \\ &Y_i \\ \hline -&X_i
}~,
}
which implies
\eq{
-h_Y q_{Z_i} &=& 0 ~.
}
As declared, we treat non-integrable systems, implying $(h_X,h_Y) \neq (0,0)$, which suggests
\eq{
q_{Z_i} = 0 ~.
}
Combining this with Eqs.~\eqref{trivial_rank2_5} and \eqref{trivial_rank2_6}, we obtain that all the $1$-support operators of the $2$-support conserved quantity we discuss here have zero coefficients.

To sum up,
the only possible pair of rank 2 Hamiltonian $H$ and its 2-local conserved quantity $Q$ is
\eq{
H &=& \sumi (\jx + \jy + \hx + \hy) ~,\nx
Q &=& \sumi (-1)^i \loq{ h_X^2 X_i X_{i+1} + h_X h_Y X_i Y_{i+1} \\
+ h_X h_Y Y_i X_{i+1} + h_Y^2 Y_i Y_{i+1} }
}
with
\eq{
J_Y = - J_X \(\frac{h_Y}{h_X}\)^2 ~.
}
This pair is what we have exhibited in Eq.~\eqref{trivial_rank2}.

Finally, we demonstrate that no $1$-support conserved quantity exist.
First, the commutator
\eq{
\bmat{
&X_i \\ &Y_i & Y_{i+1} \\ \hline +&Z_i & Y_{i+1}
}
}
gives 
\eq{
q_{X_i} = 0~,
}
and the commutator
\eq{
\bmat{
&Y_i \\ &X_i & X_{i+1} \\ \hline -&Z_i & X_{i+1}
}
}
gives 
\eq{
q_{Y_i} = 0~.
}
In addition, commutators
\eq{
\bmat{ &Z_i \\ &X_i \\ \hline +&Y_i}
}
for $h_X\neq 0$ and 
\eq{
\bmat{ &Z_i \\ &Y_i \\ \hline -&X_i}
}
for $h_Y\neq 0$ give 
\eq{
q_{Z_i} = 0~,
}
which completes the proof for the absence of $1$-support conserved quantity.

\subsection{Rank 3}
The standard form of the Hamiltonian in this case is 
\eq{
H = \sumi \loq{  \jx  +\jy + \jz\\ +\hx +\hy +\hz }
}
with $J_X,J_Y,J_Z \neq 0$. 
We focus on the case where the system is non-integrable, which means that it is neither the XXX model, the XXZ model, nor the XYZ model.

First, we prove that rank 3 Hamiltonian cannot have $2$-support conserved quantities independent of $H$.

Since the only commutator generating $(XXZ)_i$ in $[Q,H]$ is
\eq{
\bmat{
& X_i & Y_{i+1} \\
&     & Z_{i+1} & Z_{i+2} \\ \hline
+&X_i & X_{i+1} & Z_{i+2}
}~,
}
we obtain
\eq{
J_Z q_{(XY)_i} = 0 ~.
}
Similarly, we find
\eq{
&& J_{(XY)_i} = J_{(YZ)_i} = J_{(ZX)_i} \nx
&=&J_{(YX)_i} = J_{(ZY)_i} = J_{(XZ)_i} = 0 ~.
}

Commutators generating $(XZY)_i$ are
\eq{
\bmat{
& X_i & X_{i+1} \\
&     & Y_{i+1} & Y_{i+2} \\ \hline
+&X_i & Z_{i+1} & Y_{i+2}
}\qquad
\bmat{
&     & Y_{i+1} & Y_{i+2}\\
& X_i & X_{i+1} &  \\ \hline
-&X_i & Z_{i+1} & Y_{i+2}
} ~,
}
which gives the following equation:
\eq{
J_Y q_{(XX)_i} - J_X q_{(YY)_{i+1}} = 0 ~.
\label{trivial_rank3_a}
}
Commutators generating $(ZXY)_i$ are
\eq{
\bmat{
& Z_i & Z_{i+1} \\
&     & Y_{i+1} & Y_{i+2} \\ \hline
-&Z_i & X_{i+1} & Y_{i+2}
}\qquad
\bmat{
&     & Y_{i+1} & Y_{i+2}\\
& Z_i & Z_{i+1} &  \\ \hline
+&Z_i & X_{i+1} & Y_{i+2}
}~,
}
which gives the following equation:
\eq{
-J_Y q_{(ZZ)_i} + J_Z q_{(YY)_{i+1}} = 0 ~.
\label{trivial_rank3_b}
}
Combining Eqs.~\eqref{trivial_rank3_a} and \eqref{trivial_rank3_b}, we find 
\eq{
q_{(XX)_i} : q_{(YY)_i} : q_{(ZZ)_i} = J_X : J_Y : J_Z ~,
}
which is the same as the Hamiltonian itself.
This means that, by adding $H$ to any 2-support conserved quantity $Q$ with an appropriate factor, all coefficients of 2-support operators of $Q$ can be eliminated.
In other words, there is no $2$-support conserved quantity independent of $H$.

Then we explore $1$-support conserved quantities.
Commutators generating $Z_i$ are expressed as
\eq{
\bmat{
&X_i \\
&Y_i \\ \hline
+&Z_i
}\qquad
\bmat{
&Y_i \\
&X_i \\ \hline
-&Z_i
},
}
which yields the equation $h_Y q_{X_i} - h_X q_{Y_i} = 0$.
Similarly, considering commutators generating $X_i$, we obtain $h_Z q_{Y_i} - h_Y q_{Z_i} = 0$.
Combining them, we find
\eq{
q_{X_i} : q_{Y_i} : q_{Z_i} = h_X : h_Y : h_Z ~.
\label{trivial_rank3_ratio}
}

In addition, commutators generating $(ZY)_i$ are
\eq{
\bmat{
& X_i \\
& Y_i & Y_{i+1} \\ \hline
+&Z_i & Y_{i+1}
}\qquad
\bmat{
& & X_{i+1} \\
& Z_i& Z_{i+1} \\ \hline
-&Z_i & Y_{i+1}
},
}
which gives the equation
\eq{
J_Y q_{X_i} - J_Z q_{X_{i+1}} = 0 ~,
\label{trivial_rank3_1}
}
and commutators generating $(YZ)_i$ are
\eq{
\bmat{
& X_i \\
& Z_i & Z_{i+1} \\ \hline
-&Y_i & Z_{i+1}
}\qquad
\bmat{
& & X_{i+1} \\
& Y_i& Y_{i+1} \\ \hline
+&Y_i & Z_{i+1}
},
}
which gives the equation
\eq{
-J_Z q_{X_i} + J_Y q_{X_{i+1}} = 0 ~.
\label{trivial_rank3_2}
}
Comparing Eqs.~\eqref{trivial_rank3_1} and \eqref{trivial_rank3_2}, we find that either $|J_Y| = |J_Z|$ or $q_{X_i} = 0$ holds.
Similar arguments with exchanging $X$, $Y$, and $Z$ suggest the following three conditions
\eq{
|J_Y| = |J_Z| \quad&\text{or}&\quad q_{X_i} = 0 ~,\nx
|J_Z| = |J_X| \quad&\text{or}&\quad q_{Y_i} = 0 ~,\nx
|J_X| = |J_Y| \quad&\text{or}&\quad q_{Z_i} = 0 ~,
\label{trivial_rank3_or}
}
all of which should be satisfied.

Now we shall divide the cases by whether $|J_X|$, $|J_Y|$, and $|J_Z|$ have equal values and analyze these cases one by one.
First, we consider the case of $|J_X| = |J_Y| = |J_Z|$.
If $J_X = J_Y = J_Z$ holds, the system is an integrable system, the XXX model, which is out of our scope.
We consider the case that two of signs of $J_X$, $J_Y$, and $J_Z$ are equal and the other is different.
Due to the symmetry of $X, Y,$ and $Z$, without loss of generality, we treat only the case of $J_X = J_Y = -J_Z$.

From Eq.~\eqref{trivial_rank3_1} and similar equations, we have
\eq{
q_{X_{i+1}} &=& - q_{X_i} \nx
q_{Y_{i+1}} &=& - q_{Y_i} \nx
q_{Z_{i+1}} &=& q_{Z_i} ~.
}
Combining this with Eq.~\eqref{trivial_rank3_ratio}, we obtain
\eq{
q_{X_{i+1}} : q_{Y_{i+1}} : q_{Z_{i+1}} = (-h_X) : (-h_Y) : h_Z
}
for all $i$.
By comparing with Eq.~\eqref{trivial_rank3_ratio},
we find that a $1$-support conserved quantity should satisfy either $h_X=h_Y=0$ or $h_Z = 0$.
In the former case, the system is an integrable system, the XXZ model, which is out of our scope.
In the latter case, the following non-integrable Hamiltonian and $1$-support quantity are indeed commutative:
\eq{
H &=& \sumi \loq{ J_X (X_i X_{i+1} + Y_i Y_{i+1} - Z_i Z_{i+1}) \\ + h_X X_i + h_Y Y_i} ~, \nx
Q &=& \sumi (-1)^i ( \hx + \hy ) ~.
}
Applying a global spin rotation 
\eq{
R=\begin{pmatrix}
\frac{h_Y}{\sqrt{h_X^2+h_Y^2}}& -\frac{h_X}{\sqrt{h_X^2+h_Y^2}} &0\\ 
0&0&-1 \\
\frac{h_X}{\sqrt{h_X^2+h_Y^2}}& \frac{h_Y}{\sqrt{h_X^2+h_Y^2}}&0
\end{pmatrix},
}
we obtain the special case of Eq.~\eqref{trivial_rank3} with $J_X=J_Z$.

Next, we consider the case where two of $|J_X|$, $|J_Y|$, and $|J_Z|$ are the same and the rest is different.
Due to the symmetry of $X, Y,$ and $Z$, without loss of generality, we treat only the case of $|J_X| = |J_Y| \neq |J_Z|$.
In this setup, Eq.~\eqref{trivial_rank3_or} implies $q_{X_i} = q_{Y_i} = 0$, and Eq.~\eqref{trivial_rank3_ratio} further leads to $h_X = h_Y = 0$.
If $J_X = J_Y$, the system is an integrable system, the XXZ model, which is out of our scope.
Now we clarify the condition $|J_X| = |J_Y|$.
If $J_X = - J_Y$, an equation similar to Eq.~\eqref{trivial_rank3_1} yields
\eq{
q_{Z_{i+1}} = - q_{Z_i} ~.
}
The following non-integrable Hamiltonian and $1$-support quantity are indeed commutative:
\eq{
H &=& \sumi (J_X (X_i X_{i+1} - Y_i Y_{i+1}) +\jz + \hz) ~, \nx
Q &=& \sumi (-1)^i Z_i ~.
}
This pair is what we have exhibited in Eq.~\eqref{trivial_rank3}.

Finally, if all of $|J_X|$, $|J_Y|$, and $|J_Z|$ differ, there are no $1$-support conserved quantities due to Eq.~\eqref{trivial_rank3_or}.
This completes the proof.

\section{Summarizing conclusion}\label{sec:Conclusion}
In this paper, we perform a comprehensive analysis of local conserved quantities for general spin-1/2 chains with shift-invariant and parity-symmetric nearest-neighbor interaction.
We prove the absence of nontrivial local conserved quantities for all models besides already known integrable systems. 
Our theorem brings a definitive end to the investigation of further integrable systems in this class of systems.
This also substantiates the expectation that the absence of local conserved quantities is a generic property in quantum many-body systems within the range of this class.
Our results establishes a sharp contrast between integrable systems, which has infinitely many local conserved quantities, and non-integrable systems, which has no nontrivial local conserved quantities. 
In addition, all short-support conserved quantities possessed by non-integrable systems are clarified.

As seen from its proof, our theorem of non-integrability is shown by proving the non-integrability of all possible models one by one.
To extend our result to more general setups, a unified and systematic proof technique of non-integrability is strongly desired.
Such a tool will bring a transparent view of integrability and non-integrability, which is left as a challenging future problem.

\bigskip

\subsection*{Acknowledgement}
The authors are grateful to \mbox{HaRu K. Park}, \mbox{Atsuo Kuniba}, \mbox{Chihiro Matsui}, \mbox{Kazuhiko Minami}, \mbox{Masaya Kunimi}, \mbox{Hal Tasaki}, and \mbox{Hosho Katsura} for fruitful discussion.
M.Y. is supported by WINGS-FMSP.
N.S. and Y.C. are supported by JST ERATO Grant No.~JPMJER2302, Japan.
Y.C. is also supported by Japan Society for the Promotion of Science KAKENHI Grant No.~JP21J14313 and the Special Postdoctoral Researchers Program at RIKEN.

\bibliography{main}
\end{document}